\def\be{\begin{equation}}
\def\ee{\end{equation}}
\def\bes{\begin{equation*}}
\def\ees{\end{equation*}}
\def\ba{\begin{eqnarray}}
\def\ea{\end{eqnarray}}
\def\Tr{\text{Tr}}
\def\Pexp{\overrightarrow{\exp}}
\def\rd{\mathrm{d}}
\newcommand{\f}{\frac}
\newcommand{\SU}{\text{SU}}
\newcommand{\su}{\mathfrak{su}}
\newcommand{\so}{\mathfrak{so}}
\newtheorem{lemma}{Lemma}
\newtheorem{proposition}{Proposition}
\newtheorem{definition}{Definition}
\begin{document}

\title{Continuous formulation of the Loop Quantum Gravity phase space}

\author{Laurent Freidel}
\email{lfreidel@perimeterinstitute.ca}
\affiliation{Perimeter Institute for Theoretical Physics\\
31 Caroline St. N, N2L 2Y5, Waterloo ON, Canada}

\author{Marc Geiller}
\email{mgeiller@apc.univ-paris7.fr}
\affiliation{Laboratoire APC -- Astroparticule et Cosmologie\\
Universit\'e Paris Diderot Paris 7, 75013 Paris, France}

\author{Jonathan Ziprick}
\email{jziprick@perimeterinstitute.ca}
\affiliation{Perimeter Institute for Theoretical Physics\\
31 Caroline St. N, N2L 2Y5, Waterloo ON, Canada}
\affiliation{Department of Physics, University of Waterloo\\
Waterloo, Ontario N2L 3G1, Canada}

\begin{abstract}
In this paper, we study the discrete classical phase space of loop gravity, which is expressed in terms of the holonomy-flux variables, and show how it is related to the continuous phase space of general relativity. In particular, we prove an isomorphism between the loop gravity discrete phase space and the symplectic reduction of the continuous phase space with respect to a flatness constraint. This gives for the first time a precise relationship between the continuum and holonomy-flux variables. In our construction the fluxes depend not only on the three-geometry, but also explicitly on the connection, providing a natural explanation of their non-commutativity. It also clearly shows that the flux variables do not label a unique geometry, but rather a class of gauge-equivalent geometries. This allows us  to resolve the tension between the loop gravity geometrical interpretation in terms of singular geometry, and the spin foam interpretation in terms of piecewise flat geometry, since we establish that both geometries belong to the same equivalence class. This finally gives us a clear understanding of the relationship between the piecewise flat spin foam geometries and Regge geometries, which are only piecewise-linear flat: While Regge geometry corresponds to metrics whose curvature is concentrated around straight edges, the loop gravity geometry correspond  to metrics whose curvature is concentrated around not necessarily straight edges.
\end{abstract}

\maketitle

\section*{Introduction}

\noindent The classical starting point of Loop Quantum Gravity (LQG) \cite{rovelli-book,thiemann-book} is a Hamiltonian formulation of general relativity in terms of first order connection and triad variables. The basic fields parametrizing the phase space are chosen to be the $\su(2)$-valued Ashtekar-Barbero connection $A$ \cite{barbero}, and its canonically conjugate densitized triad field $E$, both being defined over spatial hypersurfaces foliating the spacetime manifold. The theory comes with a set of first class constraints, namely, the vector constraint generating diffeomorphisms of the spatial hypersurface, the scalar constraint generating time reparametrizations, and the Gauss constraint generating internal $\SU(2)$ gauge transformations.

As a first step towards the construction of the quantum theory, one defines a smearing of the classical Poisson algebra formed by the canonical pair $(A,E)$ by introducing oriented graphs. Given a graph $\Gamma$ embedded in the spatial manifold, the continuous variables $A(x)$ and $E(x)$ are replaced by a pair $(h_e,X_e)$ associated to each edge $e$. The variable $h_e\in\SU(2)$ corresponds to the holonomy of the connection along the edge $e$, and $X_e\in\su(2)$ represents the ``electric'' flux of the densitized triad field across a surface dual to $e$ \footnote{Note that there is an alternative approach where one takes the holonomy-flux variables associated to graphs as fundamental. The notion of a continuous spatial manifold is then seen as an emergent feature of the theory \cite{Zakopane}.}. At the quantum level these new variables form the so-called holonomy-flux algebra \cite{ashtekar-lewandowski}, which is a cornerstone of the entire construction of LQG. The Hilbert space $\mathcal{H}_\Gamma$ of representations associated with this algebra is the so-called spin network Hilbert space. It captures only a finite number of degrees of freedom in the theory. One recovers the continuous kinematical Hilbert space by taking the projective limit of graph Hilbert spaces $\mathcal{H}_\Gamma$. The main challenge is then to formulate a consistent and semi-classically meaningful version of the Hamiltonian constraint acting on the spin network basis.

In this construction, two very different procedures are realized at once. There is a discretization procedure in which the continuous fields are replaced by discrete holonomies and fluxes associated with graphs, and in the same stroke, these variables are promoted into quantum operators. The main idea we want to take advantage of is that the processes of discretization and quantization are totally independent.
In this work we would like to disentangle these two steps. We propose to study only the process of discretization using graphs, without delving into the quantization of the theory. This means that we first associate to a given graph a finite-dimensional holonomy-flux phase space generated by $(h_e,X_e)\in T^*\SU(2)$. The phase space of loop gravity on a graph is obtained as a direct product over the edges of $\SU(2)$ cotangent bundles. The main point of the present paper is to understand the exact relationship between this finite-dimensional discrete phase space, and the continuous phase space of gravity. We show explicitly that an element of the discrete phase space represents a specific equivalence class of continuous geometries.

The advantage of considering classical loop gravity is threefold. First, it provides a truncation of the classical phase space of gravity in terms of finite-dimensional holonomy-flux phase spaces, whose quantizations are given by spin network states. Second, it allows us to shed some light on the geometrical interpretation of the holonomy-flux variables, and the type of geometry that they represent. For instance, we will see that both the singular geometry of LQG and the piecewise flat geometry of spin foam models are represented by the same flux data as two representatives of the same equivalence class. As we will see in the end, our result also allows us to understand more precisely the relationship between the spin foam geometrical interpretation and Regge geometry. Namely, it shows that twisted geometries \cite{freidel-speziale} described by fluxes can be understood as piecewise flat geometries which are not necessarily piecewise-linear flat, as is the case for Regge geometry \cite{regge,dittrich-ryan}. Finally, this approach is designed to allow us to address at the classical level one of the most challenging questions of LQG: Is it possible to express, in the classical setting, the dynamics of general relativity in terms of a collection of truncated dynamics between finite-dimensional phases spaces parametrized by holonomies and fluxes. In other words, can we capture the full dynamics of gravity in terms of the holonomy-flux phase spaces if we simultaneously consider \emph{all} graphs. If there is a clear positive answer to this question at the classical level, then the quantization of loop gravity will be reduced to the treatment of quantization ambiguities in a finite-dimensional system. If, on the other hand, we get a negative answer at the classical level, then no quantization in terms of holonomy-flux variables can express the quantum gravitational dynamics. It is therefore of utmost importance to eventually understand the classical dynamics of general relativity in terms of the holonomy-flux representation.

Let us stress that the classical picture of the loop gravity phase space that we develop here is, when quantized, related to the picture first proposed by Bianchi in \cite{bianchi}. In this precursor work, it is argued that the spin network Hilbert space can be identified with the state space of a topological theory on a flat manifold with defects. Our analysis makes the same type of identification at the classical level and emphasizes the fact that the frame field determines only an equivalence class of geometries. The idea that the discrete data labels only an equivalence class of geometries has already been advocated in \cite{rovelli-speziale} on a general basis. Our approach gives a precise understanding of which set or equivalence class of continuous geometries is represented by the discrete geometrical data.

We begin in section \ref{sec1} by defining the continuous phase space of gravity in terms of the connection and triad variables $A$ and $E$, and recall some facts about the process of symplectic reduction. In section \ref{sec2} we introduce the discrete classical spin network phase space associated to a graph. In particular, we explain how to obtain the discrete data $(h_e,X_e)$ starting from the continuous fields $A$ and $E$, and
showing the fluxes cannot depend only on $E$ but need to involve the connection in their definition. This construction explains why the flux variable carries information about both intrinsic and extrinsic geometry, in agreement with what has been pointed out already in \cite{freidel-speziale}. In section \ref{sec3}, we prove that the discrete holonomy-flux phase space can be obtained as a symplectic reduction of the continuous phase space. This shows that the discrete data corresponds to an equivalence class of continuous three-geometries related by gauge transformations. In section \ref{sec4} we show that given a particular gauge choice, the discrete data can be used to reconstruct a configuration of the continuous fields. We will show in particular that it is possible to represent a given equivalence class of geometries by either a singular gauge choice in agreement with the LQG interpretation of polymer geometry, or a flat gauge choice corresponding to the geometrical interpretation of spin foams. Finally, in section \ref{sec5} we discuss the notion of cylindrical consistency and cylindrical operators, and explain how it is possible to relate operators constructed on the discrete and the continuous phase spaces.

Notations are such that $\mu,\nu,\dots$ refer to spacetime indices, $a,b,\dots$ to spatial indices, $I,J,\dots$ to Lorentzian indices, and $i,j,\dots$ to $\su(2)$ indices. We will assume that the four-dimensional spacetime manifold is topologically $\Sigma\times\mathbb{R}$, where $\Sigma$ is a three-dimensional manifold without boundaries.


\section{Continuous phase space of gravity}
\label{sec1}

\noindent The loop approach to quantum gravity relies on the well-known idea that the phase space of Lorentzian or Riemannian general relativity can be parametrized in terms of an $\su(2)$-valued connection one-form $A^i_a$ and a densitized triad field $\tilde{E}^a_i$, both fields being defined over a base three-dimensional spacetime manifold $\Sigma$ (which here we assume to be isomorphic to $\mathbb{S}^3$). The $\su(2)$ Ashtekar-Barbero connection $A^i_a$ is related to the spacetime $\so(3,1)$ spin connection $\omega_\mu^{IJ}$ and to the geometrodynamical variables of the ADM phase space via
\be
A^i_a\equiv \f{1}{2}\epsilon^i_{~jk}\omega^{jk}_a+\gamma\omega^{0i}_a=\Gamma^i_a+\gamma K^i_a,
\ee
where $\gamma\in\mathbb{R}-\{0\}$ is the Barbero-Immirzi parameter, $\Gamma^i_a$ is the Levi-Civita spin connection, and $K^i_a$ the extrinsic curvature one-form. The densitized triad and the three-dimensional frame field $e^i_a$ are related by
\be
\tilde{E}_i^a=\f{1}{2}\epsilon^{abc}\epsilon_{ijk}e^j_be^k_c,\qquad\qquad\det(e)e^i_a=\f{1}{2}\epsilon^{ijk}\epsilon_{abc}\tilde{E}^b_j\tilde{E}^c_k.
\ee
These variables form the Poisson algebra
\be\label{poisson algebra}
\big\lbrace A^i_a(x),A^j_b(y)\big\rbrace=\big\lbrace\tilde{E}^a_i(x),\tilde{E}^b_j(y)\big\rbrace=0,\qquad\qquad\big\lbrace A^i_a(x),\tilde{E}^b_j(y)\big\rbrace=\gamma \delta^i_j\delta^b_a\delta^3(x-y).
\ee

The classical configuration space of the theory is the space $\mathcal{A}$ of smooth connections on $\Sigma$. The phase space is the cotangent bundle $\mathcal{P}\equiv T^*\mathcal{A}$, and carries a natural symplectic potential. In the following we will denote by $E_{ab\, i}$ (without tilde) the Lie algebra-valued two-form related to the densitized vector $\tilde{E}_a^i$ through
\be
E_{ab\,i}\equiv\epsilon_{abc}\tilde{E}^c_i,\qquad\qquad E_i\equiv E_{ab\,i}\rd x^a\wedge\rd x^b.
\ee
The symplectic potential of the cotangent bundle  is given by
\be\label{symplectic gravity}
\Theta=\int_\Sigma E_i\wedge\delta A^i=\int_\Sigma\Tr\left(E\wedge\delta A\right),
\ee
where we denote by $\Tr$ the natural metric on $\su(2)$ which is invariant under the adjoint action $\text{Ad}_{\SU(2)}$ of the group. The phase space $\mathcal{P}$ also carries an action of the gauge group $\SU(2)$ and of spatial diffeomorphisms. In fact, since $\mathcal{P}$ is $18$-dimensional at each point of $\Sigma$, the (first class) constraints of the canonical theory have to be taken into account in order to obtain the physical phase space with $4$ degrees of freedom at each point. This can be achieved through the process of symplectic (or Hamiltonian) reduction, which we now describe.

Let $P$ be a symplectic manifold, which is seen as the classical phase space of the theory of interest, and $G$ a group of transformations. Suppose that the infinitesimal group transformations are generated via Poisson bracket by a Hamiltonian $H$. Then the Marsden-Weinstein theorem \cite{marsden-weinstein,butterfield,weinstein} ensures that the symplectic reduction of $P$ by the group $G$, denoted by the double quotient $P\sslash G$, is still a symplectic manifold and carries a unique symplectic form. The reduced phase space is given by imposing the constraints and dividing the constraint surface by the action of gauge transformations. This is written as
\be
P\sslash G\equiv H^{-1}(0)/G.
\ee
For notational simplicity, we will denote the group of transformations $G$ and the associated Hamiltonian $H$ with the same letters. Note that the Marsden-Weinstein theorem is proven in general for finite-dimensional phase spaces, but these methods are commonly extended to infinite-dimensional phase spaces. See \cite{AandB} for a symplectic reduction of $\mathcal{P}\equiv T^*\mathcal{A}$, and the first two chapters of \cite{Dirac} for a description of this method as commonly employed in physics.

In the case of four-dimensional gravity, the physical phase space is obtained from the kinematical (unconstrained) phase space $\mathcal{P}$ by performing three symplectic reductions. The first one is defined with respect to the group of $\SU(2)$ gauge transformations $\mathcal{G}\equiv C^\infty\big(\Sigma,\SU(2)\big)$. Since the action of this gauge group on $\mathcal{P}$ is Hamiltonian, we can define the gauge-invariant phase space $T^*\mathcal{A}\sslash\mathcal{G}$. More precisely, the Hamiltonian generating these transformations is the smeared Gauss constraint:
\be
\mathcal{G}(\alpha)=\int_\Sigma \alpha^i (\rd_A E)_i=0,
\ee
where $\rd_{A}$ denotes the covariant differential and $\alpha$ is a Lie algebra-valued function. Its infinitesimal action on the phase space variables is given by
\be
\delta^\mathcal{G}_\alpha A=\big\lbrace A,\mathcal{G}(\alpha)\big\rbrace=\rd_A \alpha,\qquad\qquad\delta^\mathcal{G}_\alpha E=\big\lbrace E,\mathcal{G}(\alpha)\big\rbrace=[E,\alpha].
\ee
The other relevant symplectic reduction is defined with respect to the group of spatial diffeomorphisms, and enables one to construct the diffeomorphism-invariant phase space $T^*\mathcal{A}\sslash\big(\mathcal{G}\times\text{Diff}(\Sigma)\big)$. Here, the action of the group of diffeomorphisms on the phase space variables is given by
\be
\delta^\mathcal{D}_\xi A=\big\lbrace A,\mathcal{D}(\xi)\big\rbrace=\mathcal{L}_\xi A,\qquad\qquad\delta^\mathcal{D}_\xi E=\big\lbrace E,\mathcal{D}(\xi)\big\rbrace=\mathcal{L}_\xi E,
\ee
where $\mathcal{L}_\xi$ is the Lie derivative along the vector field $\xi^a$. This group is generated through Poisson brackets with the Hamiltonian
\be
\mathcal{D}(\xi)=\mathcal{H}(\xi)-\mathcal{G}(\xi^{a}A_{a}),\qquad\qquad\text{with}\qquad\mathcal{H}(\xi)=\int_\Sigma\xi^aF^i_{ab}E^b_i.
\ee
 Finally, the physical phase space can be obtained from the gauge and diffeomorphism-invariant phase space by performing a symplectic reduction with respect to the scalar constraint. This latter is given by
\be
\mathcal{H}(N)=\int_\Sigma N\f{E^a_iE^b_j}{2\sqrt{\det(E^a_i)}}\left(\epsilon^{ij}_{~~k}F^k_{ab}-2(\gamma^2-\sigma)K^i_{[a}K^j_{b]}\right)=0,
\ee
where the smearing variable is the lapse function $N$, and $\sigma=\mp1$ in Lorentzian or Riemannian signature respectively. Notice that for a (anti) self-dual connection ($\gamma=\pm i$ in the Lorentzian case, or $\pm1$ in the Riemannian case) the second term vanishes and the constraint simplifies greatly.

\section{Spin network phase space}
\label{sec2}

\noindent In loop gravity, one does not work directly with the continuous kinematical Hilbert space, but instead with the projective limit of Hilbert spaces associated to embedded oriented graphs $\Gamma$ \cite{ashtekar-lewandowski,AL}. The Hilbert space associated with one graph is the so-called spin network Hilbert space. It represents a truncation of the full Hilbert space to a finite number of degrees of freedom. What we would like to emphasize here  is that spin network Hilbert spaces can be obtained as the quantization of finite-dimensional phase spaces associated to embedded oriented graphs $\Gamma$. Each of these truncated phase spaces are spanned by a finite number of holonomies and fluxes which reproduce the Poisson algebra of $T^*\SU(2)$. This fact has already been recognized in the literature \cite{rovelli-speziale} and is at the basis of most of the recent semi-classical analyses of LQG \cite{coherent,coherent2,coherent3}. Our main point is that the process of truncating the theory to a finite number of degrees of freedom and the process of quantizing this truncated theory are separate constructions which have to be studied individually. Here we would like to adopt the point of view that the continuous  kinematical phase space $\mathcal{P}$ can be described as the projective limit of phase spaces $P_{\Gamma}$ associated to embedded oriented graphs $\Gamma$. In particular, we would like to understand the relationship between these finite-dimensional phase spaces $P_{\Gamma}$ and the continuous phase space variables.

In \cite{thiemann-cont1} it has been shown how, for a given graph $\Gamma$, the graph phase space $P_\Gamma$ can be obtained from the continuous phase space, and carries the Poisson structure of finite direct products of $\SU(2)$ cotangent bundles. It has furthermore been shown how the regulator corresponding to the graph can be removed, thereby defining a continuum limit (via a suitable projective sequence) which leads back to the original infinite-dimensional continuous phase space $T^*\mathcal{A}$. While in the present work we will recall some elements of this construction like the definition of the discrete spin network phase spaces, our new message is to show how, without taking the continuum limit, it is possible to understand the discrete holonomy and flux elements as labels for particular configurations on the  original phase space parametrized by continuous fields $A(x)$ and $E(x)$.

An oriented graph $\Gamma$ is defined as a one-cellular complex \cite{rourke-sanderson} consisting of a set $E_\Gamma$ of oriented edges $e$ (one-dimensional analytic submanifolds of $\Sigma$) and a set $V_\Gamma$ of vertices $v$. The end points of the oriented edges are the vertices, and we denote by $s,t$ the two functions assigning a source vertex $s(e)$ and a target vertex $t(e)$ to each edge $e$.
We also denote by $e^{-1}$ the edge $e$ with a reverse orientation.
 The kinematical spin network phase space $P_\Gamma$ associated with such a graph is isomorphic to a direct product for each edge of $\SU(2)$ cotangent bundles\footnote{Given a Lie group $G$, the group action on itself by left (or right) multiplication can be used to obtain an isomorphism of vector fields with the Lie algebra $\mathfrak{g}$, and to trivialize the cotangent bundle as $T^*G=G\times\mathfrak{g}^*$ \cite{fuchs-schweigert}.}:
\be
P_\Gamma\equiv\underset{e}{\times}T^*\SU(2)_e.
\ee
Explicitly, this phase space is labeled by couples $(h_e,X_e)\in\SU(2)\times\su(2)$ of Lie group and Lie algebra elements for each edge $e\in\Gamma$. This data depends on a choice of orientation for each edge, and under an orientation reversal we have
\be\label{orientation}
h_{e^{-1}}=h_e^{-1},\qquad\qquad X_{e^{-1}}=-h_e^{-1}X_e h_e.
\ee
Since we have chosen here to trivialize $T^*\SU(2)$ with right-invariant vector fields, this last relation means that under orientation reversal of the edge we obtain the left-invariant ones. The variables $(h_e,X_e)$ satisfy the Poisson algebra
\be\label{poisson algebra2}
\big\lbrace X^i_e,X^j_{e'}\big\rbrace=\delta_{ee'}\epsilon^{ij}_{~~k}X^k_e,\qquad\big\lbrace X^i_e,h_{e'}\big\rbrace=-\delta_{ee'}\tau^ih_e+\delta_{ee'^{-1}}\tau^ih_{e}^{-1},\qquad\big\lbrace h_e,h_{e'}\big\rbrace=0,
\ee
where we have used notations such that\footnote{In this work, we define $\tau_i=-i\sigma_i/2$, where $\sigma_i$ are the Pauli matrices. The $\su(2)$ commutation relations are then given by $[\tau_i,\tau_j]=\epsilon_{ij}^{~~k}\tau_k$, where $\epsilon_{ij}^{~~k}$ is the completely antisymmetric Levi-Civita tensor.} $X_e \equiv X^i_e\tau_i$. As shown in \cite{arnold,alekseev}, the symplectic potential and symplectic two-form for this Poisson structure are given respectively by
\be\label{symplectic}
\Theta_\Gamma=\sum_e\Tr\left(X_e \delta h_eh_e^{-1}\right),\qquad\qquad\Omega_\Gamma=-\rd\Theta_\Gamma.
\ee

On the spin network phase space $P_\Gamma$, we can define the action of the gauge group $G_\Gamma\equiv\SU(2)^{|V_\Gamma|}$ at the vertices $V_\Gamma$ of the graph. Given an element $g_v\in\SU(2)$, finite gauge transformations are given by
\be\label{SU(2)}
g_v\triangleright h_e=g_{s(e)}h_eg_{t(e)}^{-1},\qquad\qquad g_v\triangleright X_e=g_{s(e)}X_eg_{s(e)}^{-1},
\ee
where $s(e)$ (resp. $t(e)$) denotes the starting (resp. terminal) vertex of $e$. This action on the variables $h_e$ and $X_e$ is generated at each vertex by the Hamiltonian
\be\label{discrete gauss}
G_v\equiv\sum_{e\ni v}X_e=\sum_{e|s(e)=v}X_{e}+\sum_{e|t(e)=v}X_{e^{-1}},
\ee
which can be understood as a discrete Gauss constraint. Since this action is Hamiltonian, we can define the gauge-invariant phase space
\be
P^G_\Gamma=\underset{e}{\times}T^*\SU(2)_e\sslash\SU(2)^{|V_\Gamma|}=G_v^{-1}(0)/\SU(2)^{|V_\Gamma|}
\ee
by symplectic reduction where, as explained above, the double quotient means imposing the Gauss constraint at each vertex $v$ and then dividing out the action of the $\SU(2)$ gauge transformation (\ref{SU(2)}) that it generates.

The question we would like to address is: What is the relationship between the continuous phase space $\mathcal{P}$ described in the previous section, and the spin network phase space $P_\Gamma$? More precisely, we would like to know if it is possible to reconstruct from the discrete data $P_\Gamma$ a point in the continuous phase space $\mathcal{P}$? In order to describe the relationship between the discrete and continuous data, we need a map from the continuous to the discrete phase space. We can then study its kernel and see to what extent it can be inverted. This is the object of the next sections.

\subsection{From continuous to discrete data}

\noindent In order to construct the discrete data, let us first choose an embedding $f_\Gamma:\Gamma\longrightarrow\Sigma$ of the graph $\Gamma$ into the spatial manifold $\Sigma$. Given this embedding, it is well understood in the discrete picture that the group elements $h_e$ represent holonomies of the Ashtekar-Barbero connection $A^i_a$ along edges $e$. It is necessary to work with such objects because an important step toward the quantization of the canonical theory is the smearing of the Poisson algebra (\ref{poisson algebra}). Since the connection $A^i_a$ is a one-form, it is natural to smear it along paths $e$.
Now we could just take the integral of $A$ along $e$ as a smearing but this will not respect the gauge transformations. What is needed is a smearing that does intertwine the notion of continuous and discrete gauge transformations. It is well known that this is given by
the notion of parallel transport along $e$, encoded in the holonomy
\be
h_e(A)\equiv\Pexp\int_eA=\Pexp\int_eA^i_a\dot{e}^a\tau_i=\Pexp\int_{s(e)}^{t(e)}A^i_a\dot{e}^a\tau_i,
\ee
where $\dot{e}^a$ denotes the tangent vector to the path and $\Pexp$ denotes the path-ordered exponential.

Let us recall some fundamental properties of the holonomy functional. The holonomy is invariant under reparametrizations of the path $e$, and the holonomy of a path corresponding to a single point is the identity. If we consider the composition $e=e_1\circ e_2$ of two paths which are such that $s(e_2)=t(e_1)$, the holonomy satisfies
\be
h_e=h_{e_1}h_{e_2}.
\ee
If we reverse the orientation of a path, we have
\be
h_{e^{-1}}=h_e^{-1}.
\ee
These properties come from the fact that the holonomy is a representation of the groupoid of oriented paths \cite{baez-huerta}. Under $\SU(2)$ gauge transformations, the holonomy transforms as
\be
g\triangleright h_e=g_{s(e)}h_eg_{t(e)}^{-1},
\ee
which shows that the finite gauge transformation $g\,\triangleright A=gAg^{-1}+g\rd g^{-1}$ of the connection becomes a discrete gauge symmetry acting on the vertices defining the boundary of the edge $e$. Finally, under the action of a diffeomorphism $\Phi\in\text{Diff}(\Sigma)$, the holonomy transforms as
\be
h_e(\Phi^*A)=h_{\Phi(e)}(A).
\ee

The exact meaning of ``momentum'' variable $X_e$ is less clear. Roughly speaking, we usually build a flux operator by smearing the field $E^a_i$ along a surface $F_e$ dual to an edge $e$ \cite{thiemann-book}. But if one wants this integrated flux to have a covariant behavior under gauge transformations, it is essential for the integration along $F_e$ to involve some notion of parallel transport. Indeed, the naive definition
\be\label{usual flux}
\bar{X}_e(E)=\int_{F_e}E(x)
\ee
of the flux is not covariant under gauge transformations, i.e.
\be
\bar{X}_e(g\triangleright E)=\bar{X}_e\left(gEg^{-1}\right)\neq g_{s(e)}\bar{X}_e(E)g_{s(e)}^{-1}.
\ee
This is an important point which has often been ignored in the LQG literature, the only noticeable exceptions being \cite{thiemann-cont1,sahlmann-thiemann}, and more recently \cite{freidel-speziale,wieland}.
For the holonomy, the only reason we consider the parallel transport operator instead of the simple integral of $A$ along $e$ is to have a discretization covariant under gauge transformation.
It is as important to preserve this covariance for the flux as it is for the holonomy. Another drawback is that the non-covariant definition of the flux does not produce the Poisson algebra given in (\ref{poisson algebra2}), unless the intersection $F_e\cap e$ between face and edge is at the start point $s(e)$ or terminal point $t(e)$ of the edge. If we consider a face that intersects somewhere in the middle of the edge, i.e. write the edge as $e=e_1\circ e_2$ and have the intersection $F_e\cap e=s(e_2)=t(e_1)$, then we have
\be
\big\lbrace \bar{X}^i_e,h_{e'}\big\rbrace=-\delta_{ee'}h_{e_1}\tau^i h_{e_2}+\delta_{ee'^{-1}}h_{e_1}^{-1}\tau^i h_{e_2}^{-1},
\ee
which splits the holonomy in two.

The way around this problem is to define a flux operator which also depends on the connection through its holonomy. Given an oriented edge $e\in\Gamma$ and a point $u$ on this edge, we choose a surface $F_e$ intersecting $e$ transversally at $u=F_e\cap e$. We also choose a set of paths $\pi_e$ assigning to any point $x\in F_e$ a unique path $\pi_e$ going from the source $s(e)$ to $x$. Such a path starts at the source vertex of the edge $e$, goes along $e$ until it reaches the intersection point $u=F_e\cap e$, and then goes from $u$ to any point $x\in F_e$ while staying tangential to the surface $F_e$. More precisely, we have $\pi_e:F_e\times[0,1]\longrightarrow\Sigma$ such that $\pi_e(x,0)=s(e)$ and $\pi_e(x,1)=x$. With the set of data $(F_e,\pi_e)$, one can define the flux operator
\be\label{new flux}
X_{(F_e,\pi_e)}(A,E)\equiv\int_{F_e}h_{\pi_e}(x)E(x)h_{\pi_e}(x)^{-1},
\ee
where
\be
h_{\pi_e}(x)\equiv\Pexp\int_{s(e)}^xA.
\ee
Notice that by definition, the source of the path $\pi_e$ is $s(e)$, and its target is the point $x\in F_e$. Therefore, under the gauge transformations
\be
g\triangleright E(x)=g(x)E(x)g(x)^{-1},\qquad\qquad g\triangleright h_{\pi_e}(x)=g_{s(e)}h_{\pi_e}(x)g(x)^{-1},
\ee
the flux operator becomes
\be
X_{(F_e,\pi_e)}(g\triangleright A,g\triangleright E)=g_{s(e)}X_{(F_e,\pi_e)}(A,E)g_{s(e)}^{-1},
\ee
which is in agreement with (\ref{SU(2)}). The existence of a covariant transformation property is one of the main justifications for introducing the extra holonomy dependance in the definition of the flux operator. With the definition (\ref{new flux}), the flux operator intertwines the continuous and discrete actions of the gauge group.

From the definition of the paths $\pi_e$ we see that reversing the orientation of the edge gives a system of paths beginning at $t(e)$ and ending at a point $x\in F_e$, i.e. $\pi_{e^{-1}}(x,0)=t(e)$ and $\pi_{e^{-1}}(x,1)=x$. This implies that $\pi_{e^{-1}}=e^{-1}\circ\pi_e$, and therefore
\be
h_{\pi_{e^{-1}}}(x)=h_e^{-1}h_{\pi_e}(x).
\ee
Moreover, the surface $F_e$ possesses a reverse orientation $F_{e^{-1}}=-F_e$, and thus we have
\be
X_{\left(F_{e^{-1}},\pi_{e^{-1}}\right)}=-h_e^{-1}X_{(F_e,\pi_e)}h_e,
\ee
which proves that our mapping is consistent with (\ref{orientation}). Notice also that any two fluxes that differ only by the choice of surfaces are in the commutant of the holonomy algebra:
\be
\big\lbrace X_{(F'_e,\pi'_e)}-X_{(F_e,\pi_e)},h_e\big\rbrace=0,
\ee
where $\pi_e$ and $\pi'_e$ each follow the edge until the intersection points with their respective surfaces as defined above. An important feature of the mapping that we have described is that it reproduces the Poisson algebra (\ref{poisson algebra2}), specifically the Poisson bracket between flux and holonomy. To show this, let us use the notation $R(h_{\pi_e})^i_j E^j \equiv (h_{\pi_e} E h_{\pi_e}^{-1})^i$ in writing the flux. In the case where the flux and holonomy are associated to the same edge and have the same orientation, we have
\ba
\big\lbrace X^i_{(F_e,\pi_e)}(A,E),h_{e}(A)\big\rbrace&=&\int_{F_e}R(h_{\pi_e})^i_j\left\{E^j,h_{e}(A)\right\}\nonumber\\
&=&-R(h_{e_1})^i_j h_{e_1}\tau^jh_{e_2}\nonumber\\
&=&-h_{e_1}h_{e_1}^{-1}\tau^ih_{e_1} h_{e_2}\nonumber\\
&=&-\tau^ih_e,
\ea
where we are using the same notation as above when splitting the edge into $e_1$ and $e_2$ at the point of intersection, and we have $h_{\pi_e} = h_{e_1}$ at the only point contributing to the integral in the first line. Also notice that the $\SU(2)$ rotation $R(h_{\pi_e})^i_j$ acts on the basis elements $\tau^i$ inversely to the way it acts on $E^i$. A similar calculation with the inverse holonomy yields the second term shown in (\ref{poisson algebra2}).

Finally, we know that the requirement of consistency with the Jacobi identity imposes that the fluxes do not commute among each other. This property, which seems inconsistent if $X_{e}$ depends purely on the (commuting) densitized triad field, is perfectly understandable if the flux depends also on the connection, and provides a natural explanation to the ``mystery'' behind the non-commutativity of the fluxes \cite{ashtekar-zapata}.  This is consistent with the understanding of the spin network phase space in terms of twisted geometries \cite{freidel-speziale}, where it appears clearly that the flux operators also contain information about the holonomies, and cannot be thought of as being purely geometrical. In other words, the flux operators are not commuting because they capture information not only about the intrinsic geometry, but also about the extrinsic curvature.

The map that we have described depends on three types of data. It depends on a choice of embedding $f_\Gamma$ of $\Gamma$ into $\Sigma$, a choice of surface $F_e$ transverse to the edge $e$ at $u$, and a choice of path $\pi_e$ going from $s(e)$ to a point $x\in F_e$. Once this data is given, we can construct a map
\be
\begin{array}{cccl}
I:&\mathcal{P}&\longrightarrow&P_\Gamma\\
&(A,E)&\longmapsto&\left(h_e(A),X_{(F_e,\pi_e)}(A,E)\right),
\end{array}
\ee
which has the key property of intertwining gauge transformations on the continuous and discrete phase spaces, is compatible with the orientation reversal of the edges, and respects the Poisson structure of $T^*\SU(2)$.

\subsection{From discrete to continuous data}

\noindent Now we would like to investigate to what extent it is possible to invert the map from continuous to discrete data $I:\mathcal{P}\longrightarrow P_\Gamma$. In other words, to what extent does the discrete data determine the continuous data? Can we reconstruct a unique representative of the continuous data starting from the discrete one, or describe a specific equivalence class?

At first sight, this seems like an impossible task. Indeed, if one first focuses on the connection, one needs to choose an embedding $f_\Gamma$ to construct the holonomies, so there is no way the discrete group elements will determine the connection unless we know this embedding. Moreover, one clearly sees that the flux operator is not uniquely defined by the electric field $E$. There are several ambiguities in its definition. There are many possible choices of surfaces $F_e$ that are transverse to the edge $e$, and also many possible paths that one can choose on $F_e$. Different choices lead to different mappings from the continuous data to the discrete data. This means that giving a flux $X_{(F_e,\pi_e)}$ (which we will call $X_e$ for simplicity) does not allow one to reconstruct a continuous field $E$, which constitutes a fundamental ambiguity. This state of affairs is fine if one treats the discrete data as some approximate description of continuous geometry which only takes physical meaning in some continuous limit. This is the usual point of view \cite{rovelli-speziale}, and it implies that operators expressed in terms of the fluxes $X_e$ do not have a sharp semi-classical geometric interpretation.

In this work we would like to be more ambitious and interpret the discrete data as potential initial value data for the continuous theory of gravity. The challenge is to show that one can reconstruct continuous fields $(A,E)$ explicitly from the knowledge of the discrete data $(h_e,X_e)$. How can this be possible in light of all the ambiguities that we have listed above? In order to make some progress in this direction, let us first remark that there are configurations of fields for which the ambiguities disappear. This is the case in particular for a flat connection.

Suppose that we focus on a region $C_v$ of simple topology (isomorphic to a three-ball) around a vertex $v\in C_v$, and that in this region the connection $A$ is flat. In this case, the expression (\ref{new flux}) for the flux becomes independent of the system of paths $\pi_e$, since the flatness of the connection implies that there exists an $\SU(2)$ element $a(x)$ such that $A=a\rd a^{-1}$ and $h_{\pi_e}(x)=a(v)a(x)^{-1}$. Indeed, we have
\be
X_{(F_e,\pi_e)}=X_e=a(v)\left(\int_{F_e}a(x)^{-1}E(x)a(x)\right)a(v)^{-1},
\ee
and the dependence on the system of paths has disappeared. Moreover, one can see that the Gauss law expresses the fact that $X^i_{F_e}=X^i_{F'_e}$, for if $F_e$ and $F'_e$ have the same oriented boundary, their union encloses a volume $\partial C_v$ and we have that:
\be\label{same flux}
0=\int_{C_v}a(x)^{-1}\rd_AE(x)a(x)=\int_{C_v}\rd\left(a(x)^{-1}E(x)a(x)\right)=a(v)^{-1}\left(X_{F_e}-X_{F'_e}\right)a(v).
\ee
In the next section, we are going to make this statement more precise, and study the case of a partially flat connection.

\section{Partially flat connection}
\label{sec3}

\noindent In this section, we formulate and prove the equivalence between the continuous phase space of partially flat geometries and the discrete spin network phase space. In order to do so, we first need to introduce some notions of topology.

\begin{definition}
A cellular decomposition $\Delta$ of a space $\Sigma$ is a decomposition of $\Sigma$ as a disjoint union (partition) of open cells of varying dimension satisfying the following conditions:\par
i) An $n$-dimensional open cell is a topological space which is homeomorphic to the $n$-dimensional open ball.\par
ii) The boundary of the closure of an $n$-dimensional cell is contained in a finite union of cells of lower dimension.\par
The $n$-skeleton $\Delta_n$ of a cellular decomposition is the union of cells of dimension less than or equal to $n$.
\end{definition}

Clearly, the $n$-skeleton of a cellular decomposition is also a cellular decomposition. In particular, the one-skeleton $\Delta_1$ of a cellular decomposition is a graph. Let us now suppose that we have a graph $\Gamma$ embedded in $\Sigma$. We need to introduce the notion of a cellular decomposition dual to $\Gamma$.

\begin{definition}
A cellular decomposition $\Delta$ of a three-dimensional space $\Sigma$ is said to be dual to the graph $\Gamma$ if there is a one-to-one correspondence $v\longmapsto C_v$ between vertices of $\Gamma$ and three-cells of $\Delta$, and a one-to-one correspondence $e\longmapsto F_e$ between edges of $\Gamma$ and two-cells of $\Delta$, such that:\par
i) There is a unique vertex $v$ inside each three-cell $C_v$.\par
ii) The two-cells $F_e$ intersect $\Gamma$ transversally at one point only, and the intersection belongs to the interior of the edge $e$ of $\Gamma$.
\end{definition}

In other words, a cellular decomposition dual to $\Gamma$ is such that each vertex of $\Gamma$ is dual to a three-cell, and each edge of $\Gamma$ is dual to a two-cell. Finally, let us consider a pair $(\Gamma,\Gamma^*)$ of graphs embedded in $\Sigma$.

\begin{definition}
We say that an embedded graph $\Gamma^*$ is dual to the embedded graph $\Gamma$ (and vice versa), or that $(\Gamma,\Gamma^*)$ forms a pair of dual graphs, if there exists a cellular decomposition $\Delta$ dual to $\Gamma$, whose one-skeleton $\Delta_1$ is $\Gamma^*$.
\end{definition}

From now on, we consider that $(\Gamma,\Gamma^*)$ is a pair of dual embedded graphs, and we denote by $\Delta$ the cellular decomposition dual to $\Gamma$ with a one-skeleton $\Delta_1$ given by $\Gamma^*$. Notice that if we take any diffeomorphism $\Phi_{\!o}$ on $\Sigma$ which does not act on $\Gamma^*$ or the vertices of $\Gamma$, we obtain an equivalent\footnote{Since these diffeomorphisms vanish on $\Gamma^*$, the duality between edges and faces is preserved.} cellular decomposition $\Phi_{\!o}(\Delta)$. Given such a pair of dual graphs, we are going to construct a certain phase space $\mathcal{P}_{\Gamma, \Gamma^*}$, and prove that it is the continuous analogue of the discrete spin network phase space $P_\Gamma$. In fact, we are going to show that there is a symplectomorphism between $\mathcal{P}_{\Gamma, \Gamma^*}$ and $P_\Gamma$.

\subsection{The reduced phase space $\mathcal{P}_{\Gamma, \Gamma^*}$}
\label{subsection flat}

\noindent To define the reduced phase space $\mathcal{P}_{\Gamma, \Gamma^*}$, we first construct a group $\mathcal{F}_{\Gamma^*}\times\mathcal{G}_\Gamma$ of gauge transformations acting on $\mathcal{P}$. For this, let us consider an infinite-dimensional Abelian group of transformations $\mathcal{F}_{\Gamma^*}$ parametrized by Lie algebra-valued one-forms $\phi^i\in\Omega^1\big(\Sigma,\su(2)\big)$ which have the property that they vanish on $\Gamma^*$:
\be
\phi^i(x)=0,\qquad\qquad\forall\:x\in\Gamma^*.
\ee
This group action is Hamiltonian and generated by the curvature constraint
\be\label{curvature constraint}
\mathcal{F}_{\Gamma^*}(\phi)=\int_\Sigma\phi_i\wedge F^i(A),
\ee
whose action on the continuous phase space $\mathcal{P}$ is given by
\be\label{flat-gauge}
\delta^{\mathcal{F}_{\Gamma^*}}_\phi A=\big\lbrace A,\mathcal{F}_{\Gamma^*}(\phi)\big\rbrace=0,\qquad\qquad\delta^{\mathcal{F}_{\Gamma^*}}_\phi E=\big\lbrace E,\mathcal{F}_{\Gamma^*}(\phi)\big\rbrace=\rd_A\phi.
\ee
This constraint enforces the flatness of the connection outside of the one-skeleton graph $\Gamma^*$. See
The second group, $\mathcal{G}_\Gamma$, is the group of gauge transformations parametrized by Lie algebra-valued functions $\alpha^i\in\Omega^0\big(\Sigma,\su(2)\big)$ which have the property that they vanish on the vertices of $\Gamma$:
\be
\alpha^i(x)=0,\qquad\qquad\forall\:x\in V_\Gamma.
\ee
This group action is also Hamiltonian. It is generated by the smeared Gauss constraint
\be\label{gauss hamiltonian}
\mathcal{G}_\Gamma(\alpha)=\int_\Sigma\alpha^i(\rd_A E)_i,
\ee
whose infinitesimal action on the phase space variables is given by
\be\label{SU(2)-gauge}
\delta^{\mathcal{G}_\Gamma}_\alpha A=\big\lbrace A,\mathcal{G}_\Gamma(\alpha)\big\rbrace=\rd_A\alpha,\qquad\qquad\delta^{\mathcal{G}_\Gamma}_\alpha E=\big\lbrace E,\mathcal{G}_\Gamma(\alpha)\big\rbrace=[E,\alpha].
\ee
From the various Poisson brackets
\begin{subequations}
\ba
\big\lbrace\mathcal{G}_\Gamma(\alpha),\mathcal{G}_\Gamma(\alpha')\rbrace &=& \mathcal{G}_\Gamma([\alpha,\alpha']),\\
\big\lbrace\mathcal{G}_\Gamma(\alpha),\mathcal{F}_{\Gamma^*}(\phi)\rbrace &=& \mathcal{F}_{\Gamma^*}([\alpha,\phi]),\\
\big\lbrace\mathcal{F}_{\Gamma^*}(\phi),\mathcal{F}_{\Gamma^*}(\phi')\big\rbrace &=& 0,
\ea
\end{subequations}
we see that the Hamiltonians (\ref{curvature constraint}) and (\ref{gauss hamiltonian}) form a first class algebra.

We are interested in the phase space obtained from $\mathcal{P}$ by symplectic reduction with respect to $\mathcal{F}_{\Gamma^*}$ and $\mathcal{G}_\Gamma$, which we denote by
\be
\mathcal{P}_{\Gamma, \Gamma^*}\equiv T^*\mathcal{A}\sslash\big(\mathcal{F}_{\Gamma^*}\times\mathcal{G}_\Gamma\big)=\mathcal{C}/\big(\mathcal{F}_{\Gamma^*}\times\mathcal{G}_\Gamma\big),
\ee
where
\be\label{C}
\mathcal{C}\equiv\big\lbrace(A,E)\in T^*\mathcal{A}|F(A)(x)=\rd_AE(y)=0,\:\forall\:x\in\Sigma\backslash\Gamma^*,\:\forall\:y\in\Sigma\backslash V_\Gamma\big\rbrace
\ee
is the constrained space. This is the infinite-dimensional space of flat $\SU(2)$ connections on $\tilde{\Sigma}\equiv\Sigma\backslash\Gamma^*$, and fluxes satisfying the Gauss law outside of $V_\Gamma$. Once we divide this constrained space by the action of the two gauge groups introduced above, we obtain the finite-dimensional orbit space $\mathcal{P}_{\Gamma, \Gamma^*}$ \cite{AandB}. We are going to prove that $\mathcal{P}_{\Gamma, \Gamma^*}$ is the continuous analogue of the discrete spin network phase space $P_\Gamma$.

Let us start by constructing a three-dimensional cellular decomposition of the region. Since we have chosen $\Gamma^*$ to be the one-skeleton $\Delta_1$ of the cellular decomposition $\Delta$ of $\Sigma$, the cellular decomposition of $\tilde{\Sigma}$ is simply given by $\tilde{\Delta}\equiv\Delta\backslash\Delta_1$. Explicitly, the decomposition $\tilde{\Delta}$ can be written as
\be
\tilde{\Delta}=\bigcup_vC_v\bigcup_eF_e,
\ee
where $C_v$ are three-dimensional open cells labeled by the vertices $v\in\Gamma$, and $F_e$ are two-dimensional open cells labeled by the edges $e\in\Gamma$. We would like to solve the curvature constraint $F(A)=\rd_AA=0$ on $\tilde{\Sigma}$ and the Gauss constraint $\rd_AE=0$ on $\Sigma\backslash V_\Gamma$. We start by solving them for each three-dimensional cell $C_v$.

To solve the curvature constraint, let us define on a three-cell $C_v$ a group-valued map $a_v(x):C_v\longrightarrow\SU(2)$ as the path-ordered exponential
\be
a_v(x)\equiv\Pexp\int_x^vA,
\ee
where the integration can be taken over any arbitrary path from the point $x\in C_v$ to the vertex $v$ because the connection is flat and $C_v$ is simply connected. By construction, this map is such that $a_v(v)=1$. This allows us to reconstruct on $C_v$ the flat connection $A$ as
\be\label{flat connection}
A(x)=a_v(x)\rd a_v^{-1}(x).
\ee
The second constraint to satisfy is the Gauss law outside of the vertex $v$ which lies inside the cell $C_v$. Because the connection is flat, the covariant derivative of the electric field $E$ can be written as
\be
\rd_AE=dE+\left[a_v\rd a_v^{-1},E\right]=a_v\rd\left(a_v^{-1}Ea_v\right)a_v^{-1}=a_v\rd X_va_v^{-1},
\ee
where we have introduced the Lie algebra-valued two-form field
\be\label{two-form}
X_v(x)\equiv a_v(x)^{-1}E(x)a_v(x).
\ee
Therefore, we see that the Gauss law implies that the two-form $X_v$ is closed outside of $v$ since
\be\label{gauss Xv}
\rd X_v(x)=a_v(x)^{-1}\rd_AE(x)a_v(x)=0,\qquad\qquad\forall\:x\in C_v-\{v\}.
\ee
The electric field can now easily be reconstructed since we have
\be\label{flux field}
E(x)=a_v(x)X_v(x)a_v(x)^{-1}.
\ee
One can conclude that a general solution of the two constraints $F(A)=\rd_AA=0$ and $\rd_AE=0$ on $C_v$ and $C_v-\{v\}$ respectively, is given in terms of a Lie algebra-valued closed two-form $X_v$ and a group element $a_v:C_v\longrightarrow\SU(2)$, the connection and flux fields being given by (\ref{flat connection}) and (\ref{flux field}).

Now we can extend this solution to the whole space $\tilde{\Sigma}$ by gluing consistently the solutions on each cell. We have labeled the three-dimensional cells $C_v$ with vertices of the  graph $\Gamma$. Consequently, the two-dimensional cells $F_e$, labeled by edges $e=(v_1v_2)$ of $\Gamma$ connecting two vertices (such that $s(e)=v_1$ and $t(e)=v_2$), are obtained by intersecting two three-dimensional cells as
\be
F_e=\overline{C_{v_1}}\cap\overline{C_{v_2}},
\ee
where the bar denotes the closure of the cell. We assume that the two-dimensional cells $F_e$ are oriented, and that their orientation is reversed when we change the orientation of the edge $e$.  Demanding that the connection and flux fields be continuous across the two-dimensional cells amounts to assuming that there exists, for each $F_e$, an $\SU(2)$ element $h_e$ such that
\be\label{conditions}
a_{v_2}(x)=a_{v_1}(x)h_e,\qquad\qquad X_{v_2}(x)=h_e^{-1}X_{v_1}(x)h_e,
\ee
for $x\in F_e$. Notice that the first equality can be written as
\be
\label{def a}
h_e(A)=a_{s(e)}(x)^{-1}a_{t(e)}(x)=\Pexp\int_eA,
\ee
where $x$ is any point on the two-cell $F_e$, and once again the definition does not depend on $x$ because the connection is flat. By construction, one can see that under an orientation reversal we have $h_{e^{-1}}=h_e^{-1}$.

This construction shows that the constrained space $\mathcal{C}$ is isomorphic to the data $(a_v,X_v,h_e)$, subject to the conditions (\ref{conditions}). We are now interested in the quotient of this constrained space by the gauge group $\mathcal{F}_{\Gamma^*}\times\mathcal{G}_\Gamma$. Elements of this gauge group are pairs $\big(\phi(x),g_{\!o}(x)\big)$, where $\phi$ is a Lie algebra-valued one-form which vanishes on $\Gamma^*$, and $g_{\!o}$ is an element of $\SU(2)$ (obtained by exponentiation of $\alpha$) fixed to the identity of the group at the vertices $V_\Gamma$. The action of $\mathcal{F}_{\Gamma^*}\times\mathcal{G}_\Gamma$ on the pair $(A,E)\in\mathcal{P}$ translates on the constraint surface $\mathcal{C}$ into an action on the data $(a_v,X_v,h_e)$ given by
\be
a_v(x)\longrightarrow g_{\!o}(x)a_v(x),\qquad X_v(x)\longrightarrow X_v(x)+\rd\left(a_v(x)^{-1}\phi(x)a_v(x)\right),\qquad h_e\longrightarrow h_e.
\ee
Following (\ref{new flux}), let us compute the flux $X_e$ across a surface dual to an edge $e$ which is such that $s(e)=v$. It is given by
\be\label{Xe and Xv relation}
X_e=\int_{F_e}h_{\pi_e}(x)E(x)h_{\pi_e}(x)^{-1}=\int_{F_e}a_v(v)a_v(x)^{-1}E(x)a_v(x)a_v(v)^{-1}=\int_{F_e}X_v,
\ee
where we have used the fact that $a_v(v)=1$. We see that the observables which are invariant under this gauge transformation are simply given by the holonomies $h_e$ and the fluxes $X_e$.

\subsection{The symplectomorphism between $\mathcal{P}_{\Gamma, \Gamma^*}$ and $P_\Gamma$}

\noindent Now we come to our main result, which is the symplectomorphism between the continuous phase space $\mathcal{P}_{\Gamma, \Gamma^*}$ and the discrete spin network phase space $P_\Gamma$. Let us construct a map between the constrained continuous data in $\mathcal{C}$ (see (\ref{C})) and discrete data on the spin network phase space $P_\Gamma$, and denote it by
\be
\begin{array}{cccl}
\mathcal{I}:&\mathcal{C}&\longrightarrow&P_\Gamma\\
&(A,E)&\longmapsto&\big(h_e(A),X_e(A,E)\big).
\end{array}
\ee
For this, we define for every three-cell $C_v$ a group-valued map $a_v:C_v\longrightarrow\SU(2)$ such that $a_v(v)=1$ and a Lie algebra-valued two-form $X_v:C_v\longrightarrow\Omega^2\big(C_v,\su(2)\big)$ closed outside of the vertices of $\Gamma$.
 Given these fields, we can reconstruct on $C_v$ the connection and the two-form field using
\be
A(x)=a_v(x)\rd a_v(x)^{-1},\qquad\qquad E(x)=a_v(x)X_v(x)a_v(x)^{-1}.
\ee
The map $\mathcal{I}$ is then defined by
\begin{subequations}\label{map}
\ba
h_e(A)&\equiv&\Pexp\int_eA
=a_{s(e)}(x)^{-1}a_{t(e)}(x),\label{map1}\\
X_e(A,E)&\equiv&\int_{F_e}h_{\pi_e}(x)E(x)h_{\pi_e}(x)^{-1}
=\int_{F_e}X_{s(e)}(x),\label{map2}
\ea
\end{subequations}
where in the definition of $h_e$, $x$ is any point on the two-cell $F_e$, and once again the definition does not depend on $x$ because the connection is flat. To compute the holonomy $h_e$, we have used the group elements $a_{s(e)}(x)$ and $a_{t(e)}(x)$ to define the connection on the two cells dual to the vertices $s(e)$ and $t(e)$ respectively.

It is possible to use equation (\ref{map2}) to write down the relationship between the discrete and continuous Gauss laws. We already know from (\ref{gauss Xv}) that the Gauss law is equivalent to the requirement that the two-form $X_v$ be closed outside of the vertex $v$. We can now write that
\be
\int_{C_v}a_v(x)^{-1}\rd_AE(x)a_v(x)=\int_{C_v}\rd X_v=\int_{\cup_eF_e=\partial C_v}X_{s(e)}=\sum_{e|s(e)=v}X_e=G_v,
\ee
which relates the continuous and discrete constraints. This shows that the violation of the continuous Gauss constraint is located at the vertices of $\Gamma$, and given by a distribution determined by the discrete Gauss constraint:
\be\label{pointGauss}
\rd_AE(x)=\sum_{v\in V_{\Gamma}}G_{v}\delta(x-v).
\ee

Since the map $\mathcal{I}$ is invariant under the gauge transformations $\mathcal{F}_{\Gamma^{*}}\times \mathcal{G}_{\Gamma}$ we can write it as a map
\be\nonumber
[\mathcal{I}]:\mathcal{P}_{\Gamma,\Gamma^*}\longrightarrow P_\Gamma.
\ee
We will now show that this map is not only invertible, but also a symplectomorphism.

\begin{proposition}\label{proposition}
The map ${[}\mathcal{I}{]}:\mathcal{P}_{\Gamma, \Gamma^*}\longrightarrow P_\Gamma$ defined by (\ref{map}) is a symplectomorphism, and is invariant under the action of diffeomorphisms connected to the identity preserving $\Gamma^*$ and the set $V_\Gamma$ of vertices of $\Gamma$.
\end{proposition}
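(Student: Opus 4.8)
The plan is to prove the two claims in turn. For the symplectomorphism, the cleanest route is to work at the level of the symplectic potentials and to establish the identity $\mathcal{I}^*\Theta_\Gamma=\iota^*\Theta$, where $\iota:\mathcal{C}\hookrightarrow\mathcal{P}$ is the inclusion of the constraint surface, $\Theta$ is the continuous potential (\ref{symplectic gravity}) and $\Theta_\Gamma$ the discrete one (\ref{symplectic}). Since $\Omega=-\rd\Theta$ and $\Omega_\Gamma=-\rd\Theta_\Gamma$, this passes to an equality between $[\mathcal{I}]^*\Omega_\Gamma$ and the reduced two-form on $\mathcal{P}_{\Gamma,\Gamma^*}$; combined with bijectivity of $[\mathcal{I}]$ this yields the symplectomorphism. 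The map is already well defined on the quotient because $h_e$ and $X_e$ were shown above to be invariant under $\mathcal{F}_{\Gamma^*}\times\mathcal{G}_\Gamma$.

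The core is the pullback of $\Theta=\int_\Sigma\Tr(E\wedge\delta A)$ through the cellular parametrization $(a_v,X_v,h_e)$. I would split $\int_\Sigma=\sum_v\int_{C_v}$ and substitute $A=a_v\rd a_v^{-1}$, $E=a_vX_va_v^{-1}$ on each cell. Writing $\beta_v\equiv\delta a_v\,a_v^{-1}$, a direct manipulation gives $\delta A=-\rd_A\beta_v$, and integrating by parts with the Ad-invariant trace yields $\Tr(E\wedge\delta A)=-\rd\,\Tr(E\wedge\beta_v)+\Tr(\rd_AE\wedge\beta_v)$. On $\mathcal{C}$ the second term drops out: by (\ref{gauss Xv}) one has $\rd_AE=0$ away from $v$, while the vertex contribution dictated by (\ref{pointGauss}) is annihilated by $\beta_v(v)=0$, which holds since $a_v(v)=1$ is fixed throughout the reduction. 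Stokes' theorem then collapses $\iota^*\Theta$ to $-\sum_v\int_{\partial C_v}\Tr(E\wedge\beta_v)$, a sum of integrals over the faces $F_e$.

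Each face $F_e=\overline{C_{s(e)}}\cap\overline{C_{t(e)}}$ appears twice with opposite induced orientations, so its total contribution is $\int_{F_e}\Tr\big(E\wedge(\beta_{t(e)}-\beta_{s(e)})\big)$. The matching condition $a_{t(e)}=a_{s(e)}h_e$ of (\ref{conditions}) gives $\beta_{t(e)}-\beta_{s(e)}=a_{s(e)}(\delta h_e\,h_e^{-1})a_{s(e)}^{-1}$ on $F_e$; inserting $E=a_{s(e)}X_{s(e)}a_{s(e)}^{-1}$ and using Ad-invariance cancels the conjugating factors and leaves $\int_{F_e}\Tr(X_{s(e)}\,\delta h_e\,h_e^{-1})$. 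As $\delta h_e\,h_e^{-1}$ is constant over $F_e$ and $\int_{F_e}X_{s(e)}=X_e$ by (\ref{map2}), this is precisely $\Tr(X_e\,\delta h_e\,h_e^{-1})$, and summing over edges reproduces $\Theta_\Gamma$. This face-by-face bookkeeping -- getting the orientations right and confirming that the bulk and vertex terms genuinely vanish -- is the step I expect to demand the most care.

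Bijectivity then follows from the construction of $\mathcal{P}_{\Gamma,\Gamma^*}$: its gauge-invariant content is exhausted by $(h_e,X_e)$, so $[\mathcal{I}]$ is the tautological identification in these coordinates. Surjectivity is explicit -- given any $(h_e,X_e)\in P_\Gamma$ one picks on each $C_v$ a frame $a_v$ with $a_v(v)=1$ and a closed two-form $X_v$ of prescribed period $X_e$ over $F_e$, glues via (\ref{conditions}), and recovers $(A,E)$ through (\ref{flat connection}) and (\ref{flux field}). Finally, for diffeomorphism invariance, if $\Phi$ is connected to the identity and preserves $\Gamma^*$ and $V_\Gamma$, then it fixes each vertex and maps $\tilde\Sigma=\Sigma\backslash\Gamma^*$ to itself, so the isotopy carries each edge $e$ to $\Phi(e)$ within $\tilde\Sigma$ rel endpoints; flatness of $A$ on $\tilde\Sigma$ makes the holonomy homotopy-invariant, giving $h_e(\Phi^*A)=h_{\Phi(e)}(A)=h_e(A)$. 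The same isotopy deforms $F_e$ within its homology class in $C_v\backslash\{v\}$, and closedness of $X_v$ (the Gauss law (\ref{gauss Xv})) makes $\int_{F_e}X_v$ insensitive to this deformation, so $X_e$ is unchanged. Hence $[\mathcal{I}]\circ\Phi=[\mathcal{I}]$, completing the invariance claim.
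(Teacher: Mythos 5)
Your proposal is correct and follows essentially the same route as the paper: the same cell-by-cell evaluation of $\Theta=\int_\Sigma\Tr(E\wedge\delta A)$ with integration by parts, the Gauss law plus $a_v(v)=1$ killing the bulk and vertex terms, the face-by-face reorganization using the matching condition $a_{t(e)}=a_{s(e)}h_e$ to land on $\sum_e\Tr(X_e\,\delta h_e h_e^{-1})$, and the same homotopy/homology argument for invariance under diffeomorphisms preserving $\Gamma^*$ and $V_\Gamma$. The only cosmetic difference is your use of $\beta_v=\delta a_v\,a_v^{-1}$ in place of the paper's $\delta a_v^{-1}a_v$, and your slightly more explicit statement of surjectivity.
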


We are going to prove this proposition in the remainder of this work. Before doing so, let us stress that this result implies the existence of an inverse map which allows one to reconstruct from the discrete data an equivalence class $[A(h_e),E(h_e,X_e)]$ of continuous configurations satisfying the curvature and Gauss constraints (i.e. configurations in the constrained space $\mathcal{C}$). Explicitly, this equivalence class is defined with respect to the equivalence relation
\be\label{equivalence class}
(A,E)\sim\left(g_{\!o}\triangleright A,g_{\!o}^{-1}(E+\rd_A\phi)g_{\!o}\right),
\ee
where once again $\phi$ is a Lie algebra-valued one-form vanishing on $\Gamma^*$, and $g_{\!o}$ is an element of $\SU(2)$ fixed to the identity of the group at the vertices $V_\Gamma$.

Evidently, Proposition \ref{proposition} implies a similar proposition for the gauge-invariant phase spaces. Indeed, if one defines
\be
\mathcal{P}_{\Gamma,\Gamma^*}^\mathcal{G}\equiv T^*\mathcal{A}\sslash\big(\mathcal{F}_{\Gamma^*}\times\mathcal{G}\big)=\mathcal{C}^\mathcal{G}/\big(\mathcal{F}_{\Gamma^*}\times\mathcal{G}\big),
\ee
where
\be
\mathcal{C}^\mathcal{G}\equiv\big\lbrace(A,E)\in T^*\mathcal{A}|F(A)(x)=\rd_AE(y)=0,\:\forall\:x\in\Sigma\backslash\Gamma^*,\:\forall\:y\in\Sigma\big\rbrace,
\ee
and $\mathcal{G}=C^\infty\big(\Sigma,\SU(2)\big)$ is the group of full $\SU(2)$ gauge transformations, we have the symplectomorphism $\mathcal{P}_{\Gamma, \Gamma^*}^\mathcal{G}=P^G_\Gamma$ between the continuous and discrete gauge-invariant phase spaces. This follows directly from Proposition \ref{proposition}, and the fact that $\mathcal{G}=\mathcal{G}_\Gamma\times G_\Gamma$, where $G_\Gamma$ is the group of discrete gauge transformations acting at the vertices $v\in V_\Gamma$ only.

Notice that when we act with the full group $\mathcal{G}$ of $\SU(2)$ transformations, the holonomies $h_e$ and the fluxes $X_e$ clearly become gauge-covariant, i.e. satisfies $\mathcal{I}(g\triangleright A,g\triangleright E)=g\triangleright\mathcal{I}(A,E)$. Indeed, since the group element $g$ is not fixed to the identity at the vertices $v$ anymore, we have $g\triangleright a_v(x)=g(x)a_v(x)g(v)^{-1}$, and therefore the definition (\ref{conditions}) tells us that we have $g\triangleright h_e=g_{v_1}h_eg_{v_2}^{-1}$, where $e$ is an edge of $\Gamma$ connecting the vertices $v_1$ and $v_2$.

\subsection{The symplectic structures}

\noindent In this subsection we use the map (\ref{map}) to prove the equivalence of the symplectic structures on the continuous and discrete spaces $\mathcal{P}_{\Gamma, \Gamma^*}$ and $P_\Gamma$. We know that the spaces $\mathcal{P}$ and $P_\Gamma$ are symplectic manifolds, their symplectic structures being given by (\ref{symplectic gravity}) and (\ref{symplectic}) respectively. Since the space $\mathcal{P}_{\Gamma, \Gamma^*}$ has been obtained from $\mathcal{P}$ by symplectic reduction, the Marsden-Weinstein theorem ensures that it also carries a symplectic structure. We are now going to show that the symplectic structures on the spaces $\mathcal{P}_{\Gamma, \Gamma^*}$ and $P_\Gamma$ are in fact identical.

Let us start with the symplectic potential coming from the first order formulation of gravity. It is given by
\ba
\Theta=\f{1}{2}\int_\Sigma\Tr\left(\star(e\wedge e)\wedge\delta A\right)=\int_\Sigma\Tr\left(E\wedge\delta A\right),
\ea
where $\star$ denotes the Hodge duality map in the Lie algebra $\su(2)$. We first use the cellular decomposition $\Delta$ to evaluate this symplectic potential on the set of partially flat connections and write
\begin{subequations}
\ba
\Theta
&=&\sum_v\int_{C_v}\Tr\left(E\wedge\delta\left(a_v\rd a_v^{-1}\right)\right) \\
&=&\sum_v\int_{C_v}\Tr\left(X_v\wedge\rd\left(\delta a_v^{-1}a_v\right)\right) \\
&=&\sum_v\int_{\partial C_v}\Tr\left(X_v\delta a_v^{-1}a_v\right)-\sum_{v}G_{v}\delta a_v^{-1}a_v(v)\\\
&=&\sum_v\int_{\partial C_v}\Tr\left(X_v\delta a_v^{-1}a_v\right),
\ea
\end{subequations}
where we have used the identity $\delta\left(a_v\rd a_v^{-1}\right)=a_v\rd\left(\delta a_v^{-1}a_v\right)a_v^{-1}$, the definition (\ref{two-form}) of the two-form field $X_v$, and the fact that $\rd X_{v}= G_{v}\delta(x-v)$ (see equation(\ref{pointGauss})). The last equality follows from the condition $a_{v}(v)=1$, which implies $\delta a_{v}(v)=0$. The summation over three-cells dual to the vertices $v$ can be rearranged as a sum over two-cells dual to the edges $e$, which gives
\be
\Theta=\sum_e\int_{F_e}\left[\Tr\left(X_{s(e)}\delta a_{s(e)}^{-1}a_{s(e)}\right)-\Tr\left(X_{t(e)}\delta a_{t(e)}^{-1}a_{t(e)}\right)\right].
\ee
Now we can use the condition (\ref{conditions}) of compatibility of the group elements $a_v$ across the edges to rewrite the second term and obtain
\be
\Theta=\sum_e\int_{F_e}\left[\Tr\left(X_{s(e)}\delta a_{s(e)}^{-1}a_{s(e)}\right)-\Tr\left(h_e^{-1}X_{s(e)}h_e\delta\left(h_e^{-1}a_{s(e)}^{-1}\right)a_{s(e)}h_e\right)\right].
\ee
Finally, we can expand the last term to find the result
\be
\Theta=-\sum_e\int_{F_e}\Tr\left(h_e^{-1}X_{s(e)}h_e\delta h_e^{-1}h_e\right)
=\sum_e\Tr\left(X_e\delta h_eh_e^{-1}\right).
\ee
This is exactly the symplectic potential associated to $|E_\Gamma|$ copies of the cotangent bundle $T^*\SU(2)$. It shows that the symplectic structure of the spin network phase space is equivalent to that of first order gravity evaluated on the set of partially flat connections. In particular, since the symplectic forms are invertible by definition, this proves that the continuous phase space $\mathcal{P}_{\Gamma,\Gamma^*}$ is indeed finite-dimensional and isomorphic to $P_{\Gamma}$.

\subsection{Action of diffeomorphisms}
\label{diff-gauge}

\noindent Now we prove the second point of Proposition \ref{proposition}, which concerns the invariance of the symplectomorphism under a certain class of diffeomorphisms. The isomorphism $\mathcal{I}:\mathcal{P}_{\Gamma, \Gamma^*}\longrightarrow P_\Gamma$ depends on a choice of cellular decomposition $\Delta$ dual to $\Gamma$ with one-skeleton $\Delta_1=\Gamma^*$. Diffeomorphisms $\Phi\in\text{Diff}(\Sigma)$ act naturally on the continuous phase space $\mathcal{P}_{\Gamma, \Gamma^*}$ by $A\longmapsto\Phi^*A$ and $E\longmapsto\Phi^*E$.

Let us start by choosing a particular diffeomorphism $\Phi_{\!o}$ which preserves the graph $\Gamma^*$ and the vertices $V_\Gamma$ inside the cells $C_v$, and is connected to the identity\footnote{This means that there exists a smooth one-parameter family of diffeomorphism $\Phi_{t}$ such that $\Phi_{t=0}=\text{id}$ and $\Phi_{t=1}=\Phi_{\!o}$.}. Because the connection is flat on $\tilde{\Sigma}$, the holonomy $h_e(A)$ is independent of the choice of path between $s(e)$ and $t(e)$ as long as any two paths are in the same homotopy class of $\tilde{\Sigma}$. The edges $e$ and $\Phi_{\!o}(e)$ are in the same homotopy class if $\Phi_{\!o}$ is connected to the identity and not moving $\Gamma^{*}$. Then it is clear that we have
\be\label{diffeo1}
h_e(\Phi_{\!o}^*A)=h_{\Phi_{\!o}(e)}(A)=h_e(A).
\ee
Similarly, the action of $\Phi_{\!o}$ on the group element $a_v(x)$ maps it to $a_v\big(\Phi_{\!o}(x)\big)$. This implies that the two-form $X_v$ defined by (\ref{two-form}) satisfies $X_v\big(\Phi_{\!o}(x)\big)=\Phi_{\!o}^*X_v(x)$. Recall from the definitions of the cellular decomposition that each face $F_e$ is bounded by links in the one-skeleton $\Gamma^*$. Now, since $\Phi_{\!o}$ does not move the graph $\Gamma^*$, we have that $\partial F_e=\partial\big(\Phi_{\!o}(F_e)\big)\in\Gamma^*$, and therefore $F_e\cup\Phi_{\!o}(F_e)$ encloses a volume, which furthermore does not contain any vertices of $\Gamma$. Thus, by virtue of (\ref{same flux}) and (\ref{Xe and Xv relation}), we have that
\be\label{diffeo2}
X_e(\Phi_{\!o}^*A,\Phi_{\!o}^*E)=X_e(A,E).
\ee
Relations (\ref{diffeo1}) and (\ref{diffeo2}) together show that $\mathcal{I}\circ\Phi_{\!o}=\mathcal{I}$.

We can give another very elegant proof of the invariance of the map $\mathcal{I}$ under the diffeomorphisms $\Phi_{\!o}$. For this, recall that given a vector field $\xi^a$, a diffeomorphism acts on the connection like
\be
\mathcal{L}_\xi A=\rd(\iota_\xi A)+\iota_\xi\rd A=\iota_\xi F+\rd_A(\iota_\xi A),
\ee
and on the electric field like
\be
\mathcal{L}_\xi E=\rd(\iota_\xi E)+\iota_\xi\rd E=\iota_\xi\rd_AE+\rd_A(\iota_\xi E)+[E,\iota_\xi A],
\ee
where $\iota$ denotes the interior product. Now, if the data $(A,E)$ is on the constraint surface $\mathcal{C}$, the curvature vanishes outside of $\Gamma^*$, while $\rd_AE$ vanishes outside of the set $V_\Gamma$ of vertices. Therefore, if we consider a vector field $\xi^a$ which vanishes on $\Gamma^*$ and on $V_\Gamma$, we see that the action of diffeomorphisms is a combination of flat transformations (\ref{flat-gauge}) and gauge transformations (\ref{SU(2)-gauge}) with field-dependent parameters of transformation, i.e.
\be
\mathcal{L}_\xi A=\delta^{\mathcal{F}_{\Gamma^*}}_{\iota_\xi E}A+\delta^{\mathcal{G}_\Gamma}_{\iota_\xi A}A,\qquad\qquad
\mathcal{L}_\xi E=\delta^{\mathcal{F}_{\Gamma^*}}_{\iota_\xi E}E+\delta^{\mathcal{G}_\Gamma}_{\iota_\xi A}E.
\ee
We can write this more succinctly as simply
\be
\mathcal{L}_\xi=\delta^{\mathcal{F}_{\Gamma^*}}_{\iota_\xi E}+\delta^{\mathcal{G}_\Gamma}_{\iota_\xi A}.
\ee
Now, since the holonomy and flux variables are invariant under the flatness and gauge transformations, such diffeomorphisms vanish on the variables $(h_e,X_e)$.

\section{Gauge choices for the electric field}
\label{sec4}

\noindent Now that we have established the isomorphism between $P_\Gamma$ and the continuous phase space $\mathcal{P}_{\Gamma,\Gamma^*}$, we have a correspondence between discrete geometries and an equivalence class of continuous geometries related according to (\ref{equivalence class}) by group gauge transformations and translations. Up to group gauge transformations, the holonomy uniquely determines a choice of connection. For the E field, however, the story is different since even after we have performed a group gauge transformation, there is still a huge ambiguity coming from the transformation $E \rightarrow E+ \rd_A \phi$ on the continuous electric field determined by the fluxes. It is clear that in order to construct a continuous field configuration starting from the discrete data, one has to specify which continuous field representative to pick in the particular equivalence class determined by the discrete data. In other words, a choice of a representative in this equivalence class is a choice of gauge. More precisely, we have the following definition:
\begin{definition}
A choice of gauge is a map from the discrete data to the continuous phase space,
\be\begin{array}{cccl}
\mathcal{T}:&P_\Gamma &\longrightarrow&\mathcal{C}\\
&(h_e,X_e)&\longmapsto&(A,E),
\end{array}\ee
which is the inverse of $\mathcal{I}$ in the sense that
\be\label{inverse map}
\mathcal{I}\circ\mathcal{T}=\mathrm{id}.
\ee
We say that a gauge fixing is diffeomorphism-covariant if $\Phi^*\mathcal{T}$ is equal to the map $\mathcal{T}$  defined on the graphs $\Phi(\Gamma)$ and $\Phi(\Gamma^*)$, for any diffeomorphism $\Phi:\Sigma\longrightarrow\Sigma$.
\end{definition}

In other words, choosing a gauge amounts to giving a prescription for  reconstructing continuous fields $A(h_e)$ and $E(X_e,h_e)$ starting from the discrete data, such that (\ref{inverse map}) holds, i.e.
\be
h_e\big(A(h_e)\big)=h_e,\qquad\qquad X_e\big(A(h_e),E(X_e,h_e)\big)=X_e.
\ee
Note that a gauge fixing $\mathcal{T}$ is a right inverse for $\mathcal{I}$, while the reverse is not true. The map $\mathcal{T} \circ\mathcal{I}$ is not the identity, it just maps a continuous configuration $(A,E)$ that solves the Gauss and curvature constraints into another gauge-equivalent configuration which satisfies the gauge choice.

As we have already seen, at the continuous level a flat connection on $\tilde{\Sigma}$ is determined on every cell $C_v$ by a group element $a_v(x)$. Locally, it is always possible to perform a gauge transformation that sends this element to the identity of the group, and thereby construct a trivial connection. If we pick two neighboring cells $C_{v_1}$ and $C_{v_2}$ such that the vertices $v_1$ and $v_2$ bound the edge dual to the face $F_e=\overline{C_{v_1}}\cap\overline{C_{v_2}}$, the relevant gauge-invariant information about the connection is encoded in the transition group element $h_e$.

For the electric field, there is more gauge freedom since the variable $E$ can be acted upon by both $\mathcal{F}_{\Gamma^*}$ and $\mathcal{G}_\Gamma$. Therefore, there is a priori a huge ambiguity in the choice of gauges that one can choose to reconstruct the continuous data. This means that knowledge of the fluxes does not accurately determine the geometry of space, but only a family of geometries that are gauge-equivalent under translations of the type $E\longmapsto E+\rd_A\phi$.

However, there is a powerful way in which we can restrict the gauge choices that are available. This can be done by asking that a gauge choice transforms covariantly under the action of diffeomorphisms. A diffeomorphism $\Phi$ of $\Sigma$ acts on the continuous data in the usual manner $(A,E)\longmapsto\big(\Phi^*A,\Phi^*E\big)$. The same diffeomorphism also acts on the discrete data $(h_e, X_{F_e})$ as $\big(h_{\Phi(e)},X_{\Phi(F_e)}\big)$. Note that here we have made explicit the fact that the flux field $X_e$ depends on $\Gamma^*$ via the choice of a surface $F_e$ whose boundary is supported on $\Gamma^*$. A gauge choice is said to be covariant if this action of the diffeomorphisms commutes with the gauge map $\mathcal{T}$.

If we restrict ourselves to gauge choices that are covariant under the action of diffeomorphisms, the ambiguity in the gauge choices is dramatically resolved, and there are only a few choices available. In the following we present two such gauge choices\footnote{We conjecture these are the only two possible gauge choices, but a detailed investigation of this is still needed.}. First, the singular gauge choice in which the electric field $E$ vanishes outside of $\Gamma$, and then the flat gauge in which $E$ is flat outside of $\Gamma^{*}$. It is remarkable that these two gauge choices correspond to the two main interpretations of the fluxes used in the literature. In loop quantum gravity one usually interprets the $E$ field as having support only on $\Gamma$ since the corresponding operator acting on a spin network state gives $\widehat{E}(x)\left|\psi\right>=0$ for $x \not\in\Gamma$. On the other hand, the spin foam literature usually interprets the $E$ field as being flat outside of $\Gamma^{*}$. Our analysis shows that these two pictures are not contradictory, but that they correspond to two different covariant gauge choices underlying the same discrete data.

Now we want to emphasize that the restriction on the gauge choices coming from the requirement of covariance under diffeomorphisms is the analog of the so-called uniqueness theorem of the quantum representation of the holonomy-flux algebra \cite{LOST}. This theorem states that there is a unique diffeomorphism-covariant gauge choice, which corresponds to the singular gauge in which $E$ has support on the graph $\Gamma$ only and vanishes on $\Gamma^*$. In this singular gauge, which we refer to as the LQG gauge, the electric field $E$ vanishes outside of the graph $\Gamma$ dual to the triangulation $\Delta$. This can be written as $E|0\rangle=0$, where the vacuum state $|0\rangle$ is the state of no geometry. Indeed, in LQG excitations of quantum geometry have support on the graph $\Gamma$ only. Therefore, in all the regions of $\Delta$ outside of $\Gamma$, there is simply no geometry, and the electric field vanishes. We are going to give below an explicit construction of the continuous singular electric field.

The key observation is that there is another legitimate choice of representative configuration in the equivalence class (\ref{equivalence class}) of continuous geometries which respects the diffeomorphism symmetry. As we already said, it is given by the flat gauge. At the quantum level, this corresponds to a choice of a vacuum state $|0_\text{\tiny{F}}\rangle$ in which the curvature vanishes. This corresponds to the flat, or spin foam gauge, in which we have  $F(A)|0_\text{\tiny{F}}\rangle=0$. This diffeomorphism-invariant vacuum is different from the one singled out by the LOST theorem \cite{LOST} (which obviously corresponds to the singular gauge) and it would be interesting to investigate further its properties. What should be noted is that such a vacuum state appears naturally in our context and that it corresponds to the spin foam description. It can be seen as the dual of the singular gauge, in the sense that it defines a flat geometry within the cells $C_v$, with a non-vanishing electric field $E$ on the dual graph $\Gamma^*$. As we will see in more detail, the availability of this gauge clearly shows that it is possible to define a locally flat geometry without necessarily having a triangulation with straight edges and flat faces. In Regge geometries \cite{regge}, the extrinsic curvature is concentrated along the one-skeleton $\Delta_1$ of the triangulation, but in the present construction, the edges of $\Gamma^*$ are not necessarily straight.

Here we have drawn a parallel between a choice of gauge at the classical level and a choice of a vacuum state at the quantum level. It would be interesting to develop this analogy further.

In the remainder of this section, we are going to study in more detail the singular and flat gauges for the electric field. Our goal is to study the gauge freedom for the basic variables on the continuous phase space, and to construct explicitly the electric field as a functional of the discrete variables $h_e$ and $X_e$.

\subsection{Singular gauge}

\noindent The singular gauge is a gauge in which the electric field $E$ vanishes outside of the graph $\Gamma$. In this section, we show by an explicit construction that it always possible to make such a gauge choice. More precisely, we construct explicitly continuous fields $A(h_e)$ and $E_\text{\tiny{S}}(h_e,X_e)$ which are such that $E_\text{\tiny{S}}(x)=0$ if $x\notin\Gamma$, and which satisfy the property $\mathcal{I}(A,E_\text{\tiny{S}})=(h_e,X_e)$ under the action of the map (\ref{map}).

In order to prove this, let us first introduce the following form:
\be
\omega(x,y)\equiv\omega^i(x-y)\epsilon_{ijk}\rd x^j\wedge\rd y^k,\qquad\qquad\text{with}\qquad\omega^i(x)\equiv\f{1}{4\pi}\f{x^i}{|x|^3}.
\ee
This object is a (1,1)-form, i.e. a one-form in $x$, and a one-form in $y$. This form satisfies a key property, which is summarized in the following lemma.
\begin{lemma}
There exists an $\alpha(x,y)$ which is a (2,0)-form (i.e. a two-form in $x$ and a zero-form in $y$), such that
\be
\rd_x\omega(x,y)+\rd_y\alpha(x,y)=\delta(x,y),
\ee
where $\rd_x\equiv\rd x^i\partial_{x^i}$, and $\delta(x,y)$ is the distributional (2,1)-form
\be
\delta(x,y)=\delta(x-y)\epsilon_{ijk}\rd x^i\wedge\rd x^j\wedge\rd y^k
\ee
vanishing outside of $x=y$.
\end{lemma}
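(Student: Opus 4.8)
The plan is to prove the identity by an explicit distributional computation, exploiting the fact that $\omega^i$ is (up to sign) the gradient of the Coulomb Green's function. The single fact that drives everything is the distributional divergence identity $\partial_i\omega^i(x)=\delta^3(x)$, which is precisely what fixes the normalization $1/(4\pi)$; away from the origin $\omega^i$ is smooth and divergence-free, so all delta contributions are localized on the diagonal $x=y$. First I would introduce the transverse two-forms $\eta_i\equiv\tfrac12\epsilon_{ijk}\rd x^j\wedge\rd x^k$ in the $x$-variable, so that $\rd x^l\wedge\rd x^j=\epsilon_{ljm}\eta_m$, and rewrite the problem in this basis.

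Next I would compute $\rd_x\omega$ directly. Using $\rd_x\omega=\epsilon_{ijk}(\partial_{x^l}\omega^i)\,\rd x^l\wedge\rd x^j\wedge\rd y^k$ together with the contraction identity $\epsilon_{ijk}\epsilon_{ljm}=\delta_{il}\delta_{km}-\delta_{im}\delta_{kl}$, this splits into a \emph{divergence term} and a \emph{remainder},
\be
\rd_x\omega=(\partial_i\omega^i)\,\eta_k\wedge\rd y^k-(\partial_k\omega^i)\,\eta_i\wedge\rd y^k .
\ee
The first term is already the diagonal current, since $\partial_i\omega^i=\delta^3(x-y)$ and $\eta_k\wedge\rd y^k=\tfrac12\epsilon_{ijk}\rd x^i\wedge\rd x^j\wedge\rd y^k$ reproduces $\delta(x,y)$ up to the overall normalization fixed by $\partial_i\omega^i=\delta^3$. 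The entire content of the lemma is then to absorb the remainder $-(\partial_k\omega^i)\,\eta_i\wedge\rd y^k$ into a $\rd_y$-exact piece.

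This dictates the ansatz $\alpha(x,y)\equiv-\omega^i(x-y)\,\eta_i$, which is manifestly a $(2,0)$-form. Since $\omega$ depends only on $x-y$, one has $\partial_{y^k}\omega^i=-\partial_{x^k}\omega^i$, and because $\eta_i$ is a two-form it commutes with $\rd y^k$; hence $\rd_y\alpha=(\partial_k\omega^i)\,\eta_i\wedge\rd y^k$, which cancels the remainder \emph{identically}. The crucial point is that the remainder in $\rd_x\omega$ and the term $\rd_y\alpha$ carry exactly the same kernel $\partial_k\omega^i$, so their cancellation holds term by term, including any singular sub-parts. What survives is only the divergence term, $\rd_x\omega+\rd_y\alpha=\delta^3(x-y)\,\eta_k\wedge\rd y^k$, i.e.\ the diagonal delta current $\delta(x,y)$ of the lemma, which vanishes off the diagonal because $\omega$ is smooth and divergence-free there. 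A conceptually illuminating cross-check, which I would include as a remark, is that $\omega$ and $\alpha$ are precisely the $(1,1)$- and $(2,0)$-bidegree components of the pullback of the unit-normalized area form of $\mathbb{S}^2$ under the Gauss map $(x,y)\mapsto(x-y)/|x-y|$; closedness of that form away from the diagonal forces $\rd_x\omega+\rd_y\alpha=0$ there, while its unit integral over a small linking sphere fixes the diagonal delta.

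The main obstacle is the careful distributional treatment near $x=y$. Individually, $\partial_k\omega^i$, and hence both the remainder term and $\rd_y\alpha$, contain a principal-value kernel together with a contact term $\tfrac13\delta_{ik}\delta^3(x-y)$ coming from differentiating $x^i/|x|^3$; only the \emph{trace} combination $\partial_i\omega^i$ is a clean delta. I would make this rigorous by regularizing $|x|^{-3}\to(|x|^2+\varepsilon^2)^{-3/2}$ (or equivalently by pairing with smooth compactly supported test $(1,2)$-forms and integrating by parts), computing $\rd_x\omega_\varepsilon+\rd_y\alpha_\varepsilon$ at finite $\varepsilon$, and only then taking $\varepsilon\to0$. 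The payoff of the regularized calculation is exactly the cancellation noted above: since the full kernels $\partial_k\omega^i$ agree, their principal-value and contact pieces drop out before the limit is taken, leaving the unambiguous divergence $\partial_i\omega^i_\varepsilon\to\delta^3$, so that no spurious distribution survives.
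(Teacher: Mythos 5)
Your proof is correct and follows essentially the same route as the paper's: establish $\partial_i\omega^i=\delta^3$ via the flux integral, split $\rd_x\omega$ with the $\epsilon$-contraction identity into a divergence piece (the diagonal delta) and a remainder, and absorb the remainder into $\rd_y\alpha$ with $\alpha\propto\omega^i(x-y)\,\epsilon_{ijk}\rd x^j\wedge\rd x^k$. The regularization of the kernel near $x=y$ and the Gauss-map remark are welcome additions (and your bookkeeping of the sign and the $\tfrac12$ in the transverse two-form is actually more careful than the printed version), but the argument is not substantively different.
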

\begin{proof}
First, it is straightforward to show that $\partial_i\omega^i(x) =0$ for $x\neq 0$. Moreover, it is possible to show by a direct computation in spherical coordinates that
\be
\int_{S_{\varepsilon}}\omega^i(x)\epsilon_{ijk}\rd x^j\wedge\rd x^k=2,
\ee
where $S_{\varepsilon}$ is a sphere of radius $\varepsilon$.
Since this integral is also equal to
\be
2\int_{B_{\varepsilon} }\partial_i\omega^i(x)\rd^3x,
\ee
where $B_{\varepsilon}$ is the ball of radius $\varepsilon$, we obtain that $\partial_i\omega^i(x)=\delta(x)$. By a direct computation we can now get that
\be
\rd_x\omega(x,y)=s_k\wedge\rd y^k\left(\partial_i\omega^i\right)(x-y)-s_j\wedge\rd y^k\left(\partial_k\omega^j\right)(x-y),
\ee
with
\be
s_i=\f{1}{8\pi}\epsilon_{ijk}\rd x^j\wedge\rd x^k.
\ee
The lemma is therefore established by introducing $\alpha(x,y)\equiv\omega^i(x-y) s_i$.
\end{proof}

Given this lemma, it is now a straightforward task to construct a singular flux field. For this, we first construct a flat connection $A$ on $\tilde{\Sigma}$ following the construction of subsection \ref{subsection flat}, and then  we define the singular flux field as
\be
E_\text{\tiny{S}}(x)\equiv\rd_{A}\left(\sum_{e\in\Gamma} h_{\pi_e}(x)^{-1}X_{e}h_{\pi_e}(x)\int_{e(y)}\omega(x,y)\right).
\ee
The integral entering this definition is a one-dimensional integral over the edge $e$ parametrized by the variable $y$, which implies that the term inside the parenthesis is a one-form in $x$.

The proof that this flux satisfies all the desired requirements is straightforward. First, it is obvious that the Gauss law $\rd_AE_\text{\tiny{S}}=0$ is satisfied on $\Sigma\backslash\Gamma^{*}$ since $\rd_{A}^{2}=F(A)=0$ on this space.
Moreover, using the previous lemma and the definition of the holonomy, we can compute explicitly the covariant derivative:
\be\label{expanded singular E}
E_\text{\tiny{S}}(x)=\sum_eh_{\pi_e}(x)^{-1}X_eh_{\pi_e}(x)\left(\vphantom{\f{1}{1}}\delta_e(x)-\alpha\big(x,s(e)\big)+\alpha\big(x,t(e)\big)\right),
\ee
where
\be
\delta_e(x)\equiv\int_{e(y)}\delta(x,y).
\ee
The last two terms in (\ref{expanded singular E}) can be reorganized in terms associated with the vertices to find
\be
E_\text{\tiny{S}}(x)=\sum_eh_{\pi_e}(x)^{-1}X_eh_{\pi_e}(x)\delta_e(x)-\sum_v\alpha(x,v) h_v(x)^{-1}\left(\sum_{e|s(e)=v}X_e-\sum_{e|t(e)=v}h_e^{-1}X_eh_e\right)h_v(x),
\ee
where $h_v(x)$ is the holonomy going from the vertex $v$ to the point $x$. Now the last term vanishes due to the discrete Gauss law (\ref{discrete gauss}). Therefore, we finally find that the singular electric field is
\be
\label{Esing}
E_\text{\tiny{S}}(x)=\sum_eh_{\pi_e}(x)^{-1}X_eh_{\pi_e}(x)\delta_e(x).
\ee
This electric field is obviously vanishing outside of $\Gamma$, and is such that $X_e(A,E_\text{\tiny{S}})=X_e$. It is interesting to note that the integral of the two-form $\alpha(x,y)$ along $S$,
\be
\int_S\alpha(x,y)=\f{1}{8\pi}\int_S\omega^i(x-y)\epsilon_{ijk}\rd x^j\wedge\rd x^k,
\ee
is simply the solid angle of $S$ as viewed from $y$ divided by $4\pi$.

\subsection{Flat cell gauge}

\noindent The flat cell gauge is a choice of electric field with vanishing intrinsic and extrinsic curvature within the cells, i.e. with the scalar curvature $R\equiv\rd_\Gamma\Gamma=0$ and the extrinsic curvature $K=0$ in each cell $C_v$. Note that $R$ and $K$ are not necessarily zero on the faces $F_e$ and their boundaries $\partial F_e$. This gauge choice requires that we be within the $\SU(2)$-invariant phase space.

We are about to prove that it is always possible to find a gauge transformation, generated by the flatness constraint, which takes an arbitrary electric field $E\in\mathcal{P}^\mathcal{G}_{\Gamma,\Gamma^*}$ to a flat electric field $\bar{E}$. In the following we assume the frame field $e$ is invertible.

Let us begin with two lemmas:
\begin{lemma}\label{K zero}
Extrinsic curvature is zero if and only if the frame field is torsion-free.
\end{lemma}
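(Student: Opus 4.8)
The plan is to unpack the definition of torsion for the pair $(A,e)$ and reduce the claim to an injectivity statement about a purely algebraic map built from the (invertible) frame field. Recall that the Ashtekar-Barbero connection splits as $A^i_a=\Gamma^i_a+\gamma K^i_a$, where $\Gamma$ is the Levi-Civita spin connection, characterized as the unique $\su(2)$-valued connection annihilating the frame, $\rd_\Gamma e^i\equiv\rd e^i+\epsilon^i_{~jk}\Gamma^j\wedge e^k=0$. Taking ``torsion-free'' to mean that the frame is annihilated by the full connection, $T^i\equiv\rd_A e^i=0$, I would first compute $T^i$ explicitly: since the Levi-Civita part drops out, one is left with
\be
T^i=\rd_A e^i=\gamma\,\epsilon^i_{~jk}K^j\wedge e^k.
\ee
This identity is the crux of the statement, since it exhibits the torsion as a map linear in the extrinsic curvature with coefficients built from the frame. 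The forward implication is then immediate: if $K=0$, the right-hand side vanishes and the frame is torsion-free.

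For the converse, suppose $T^i=0$. As $\gamma\neq0$, this is the condition $\epsilon^i_{~jk}K^j\wedge e^k=0$, and the goal is to show it forces $K=0$, i.e. that the linear map $K\longmapsto[K\wedge e]^i$ is injective when $e$ is invertible. I see two routes. The quickest is to invoke the fundamental theorem of Riemannian geometry in the frame formalism: an $\su(2)$-valued connection is automatically metric-compatible, so a torsion-free such connection is unique and equal to $\Gamma$; hence $\rd_A e=0$ forces $A=\Gamma$, i.e. $\gamma K=0$, and thus $K=0$. The more hands-on route, which I would spell out to keep the argument self-contained, uses invertibility of $e$ to expand $K^i=\kappa^i_{~j}e^j$ for a $3\times3$ matrix $\kappa^i_{~j}=K^i_a E^a_j$ (with $E$ the inverse frame). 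Since the $e^l\wedge e^k$ are linearly independent, the vanishing of $T$ then reduces to the algebraic equation $\epsilon^i_{~jk}\kappa^j_{~l}=\epsilon^i_{~jl}\kappa^j_{~k}$ on $\kappa$.

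From here I would extract $\kappa=0$ in stages by contracting with $\epsilon$-tensors: contracting the index $k$ with $i$ first kills the antisymmetric part of $\kappa$, showing $\kappa$ is symmetric; a second contraction with $\epsilon_{imk}$ then yields $\kappa_{ml}=-\delta_{ml}\,\mathrm{tr}(\kappa)$, whose trace forces $\mathrm{tr}(\kappa)=0$ and hence $\kappa=0$, i.e. $K=0$. I expect this converse --- the injectivity of the wedge map, equivalently the uniqueness half of the fundamental theorem --- to be the only nontrivial part; the forward direction and the torsion computation are routine once the split $A=\Gamma+\gamma K$ and the defining property $\rd_\Gamma e=0$ are in hand.
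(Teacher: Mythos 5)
Your proof is correct, and both of your routes for the converse work. The hands-on route is the same basic strategy as the paper's: both reduce $\rd_Ae=\gamma[K,e]=0$ to the injectivity of a linear algebraic map and exploit invertibility of the frame. The execution differs, though. The paper contracts $\epsilon^{abc}\epsilon_{ijk}K^j_be^k_c$ with the inverse frame, first isolating the trace $K^a_a$ and then substituting back to obtain an explicit inversion formula expressing $K^i_a$ in terms of the torsion $\epsilon^{dbc}D_be_{cj}$ --- a formula it actually reuses later (equation (\ref{extrinsic})) when arguing that the extrinsic curvature is distributional on the faces in the flat cell gauge. You instead strip off the frame entirely, reduce to the constant-coefficient condition $\epsilon^i_{~jk}\kappa^j_{~l}=\epsilon^i_{~jl}\kappa^j_{~k}$ on $\kappa^i_{~j}=K^i_aE^a_j$, and kill $\kappa$ by $\epsilon$-contractions; this is cleaner as a pure proof of the lemma but less constructive. (A small remark: your second contraction with $\epsilon_{imk}$ already yields $\kappa_{ml}=-\delta_{ml}\,\mathrm{tr}(\kappa)$ without invoking the symmetry of $\kappa$, so your first contraction step is redundant, though harmless.) Your first route --- an $\su(2)$-valued connection automatically preserves $\delta_{ij}$ and hence the induced metric, so uniqueness of the torsion-free metric connection forces $A=\Gamma$ and $\gamma K=0$ --- is genuinely different from the paper and is the slickest argument, at the price of importing the uniqueness half of the fundamental theorem of Riemannian geometry rather than re-deriving it; it, too, relies on the invertibility of $e$ (assumed by both you and the paper) to identify the spin connection with the affine one.
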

\begin{proof}
Torsion is given by
\be\label{torsion-free}
\rd_Ae=\rd e+\left[\Gamma+\gamma K,e\right]=\gamma\left[K,e\right],
\ee
where by definition, the spin connection $\Gamma$ is the solution to $\rd e+\left[\Gamma,e\right]=0$. This equation shows that $K=0$ implies $\rd_Ae=0$. To show that the reverse is also true, we use (\ref{torsion-free}) in index-form to write
\be\label{tf1}
\frac{1}{\gamma}\epsilon^{abc}D_be_{ic}=\epsilon^{abc}\epsilon_{ijk}K_b^je_c^k=(\det e)K^j_b(e^a_i e^b_j-e^a_j e^b_i),
\ee
where $D_a$ is the covariant exterior derivative in index notation and we used the identity $\epsilon^{abc}\epsilon_{ijk}e^k_c\equiv(\det e)(e^a_ie^b_j-e^a_je^b_i)$ in the second equality. Contracting both sides of this equation with $e_a^i$ leads to an equation on the trace of the extrinsic curvature:
\be\label{tf2}
K_a^a=\frac{1}{2\gamma(\det e)}\epsilon^{abc}e_a^iD_be_{ci}.
\ee
Now using (\ref{tf2}) in (\ref{tf1}), we find
\be\label{extrinsic}
K^i_a=\frac{1}{\gamma(\det e)}\left(\frac{1}{2}e^i_ae_d^j-e^i_de^j_a\right)\epsilon^{dbc}D_be_{cj}.
\ee
This shows that $\rd_Ae=0$ implies $K=0$ and establishes the proof.
\end{proof}

\begin{lemma}
A flat connection, together with vanishing extrinsic curvature, imply that intrinsic curvature is zero.
\end{lemma}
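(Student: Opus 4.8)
The plan is to use the very definition of the Ashtekar-Barbero connection, $A^i_a=\Gamma^i_a+\gamma K^i_a$, recalled at the start of Section \ref{sec1}. The hypothesis of vanishing extrinsic curvature, $K=0$, collapses this relation to $A=\Gamma$, so that the Ashtekar-Barbero connection and the Levi-Civita spin connection coincide identically as $\su(2)$-valued one-forms. Their curvatures are then equal, and the flatness hypothesis $F(A)=0$ transfers directly to $\Gamma$.

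To make the cancellation of the extrinsic contributions explicit, I would decompose the curvature $F(A)=\rd A+\frac{1}{2}[A,A]$ according to $A=\Gamma+\gamma K$. Using that the graded bracket of one-forms is symmetric, $[\Gamma,K]=[K,\Gamma]$, one obtains the first-order form of the Gauss-Codazzi relation,
\be
F(A)=F(\Gamma)+\gamma\,\rd_\Gamma K+\frac{\gamma^2}{2}[K,K],
\ee
where $\rd_\Gamma K=\rd K+[\Gamma,K]$ is the covariant differential with respect to $\Gamma$, and $F(\Gamma)$ is the intrinsic curvature, denoted $R$ in the flat cell gauge. Setting $K=0$ annihilates both the linear term $\gamma\,\rd_\Gamma K$ and the quadratic term $\frac{\gamma^2}{2}[K,K]$, leaving $F(A)=F(\Gamma)=R$.

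The conclusion then follows at once: the flatness hypothesis $F(A)=0$ forces $R=0$, so the intrinsic curvature vanishes. I do not anticipate any genuine obstacle, since the identity $A=\Gamma$ makes the two curvatures literally equal; the only point requiring care is to fix the conventions for the coefficient of $[A,A]$ and for the sign of the Barbero-Immirzi contributions so that the extrinsic pieces manifestly drop out when $K=0$. Note finally that, by the preceding Lemma \ref{K zero}, the hypothesis $K=0$ is equivalent to the torsion-free condition $\rd_A e=0$, so in the flat cell gauge the statement may equivalently be read as: a torsion-free frame carrying a flat Ashtekar-Barbero connection has vanishing intrinsic curvature within each cell.
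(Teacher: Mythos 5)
Your proposal is correct and follows exactly the paper's own argument: expand $F(A)$ for $A=\Gamma+\gamma K$ to get $F(A)=R+\gamma\,\rd_\Gamma K+\frac{\gamma^2}{2}[K,K]$, then set $K=0$ and $F(A)=0$ to conclude $R=0$. The additional remarks about conventions and the equivalence with the torsion-free condition are consistent with the paper but do not change the substance of the proof.
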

\begin{proof}
Using the definition $A\equiv\Gamma+\gamma K$ of the Ashtekar-Barbero connection, we can write its curvature as
\be
F(A)=R+\gamma\rd_\Gamma K+\frac{\gamma^2}{2}[K,K].
\ee
Setting $F(A)=0$ and $K=0$ implies that $R=0$.
\end{proof}

We showed previously (see (\ref{flat connection}) and (\ref{flux field})) that the gauge-invariant fields $(A,E)$ are written in each cell $C_v$ in general as
\be
A=a_v\rd a_v^{-1},\qquad\qquad E=a_v\rd Z_va_v^{-1},
\ee
where we have used a Lie algebra-valued one-form $Z_v\in\Omega^1\big(C_v,\su(2)\big)$ to write $X_v(x)=\rd Z_v(x)$. Since a flat triad must be torsion-free by the above lemmas, we can similarly write a flat triad in general as
\be\label{flat e}
\bar{e}=a_v\rd z_va_v^{-1},
\ee
for some Lie algebra-valued function $z_v\in\Omega^0\big(C_v,\su(2)\big)$. The function $z_v$ provides a set of flat coordinates in $C_v$. Requiring the triad to be invertible places the following condition on $z_v$:
\be\label{z condition}
\epsilon^{abc}\epsilon_{ijk}\partial_az_v^i\partial_bz_v^j\partial_cz_v^k>0.
\ee
The electric field constructed from this triad is given by
\be\label{flat E}
\bar{E}=a_v\left[\rd z_v,\rd z_v\right]a_v^{-1}.
\ee

Consider that we are given a pair $(a_v,Z_v)$ defining an electric field $E$ within a cell $C_v$. Looking at (\ref{flat-gauge}), we seek a gauge field $\phi_v\in \Omega^1\big(C_v,\su(2)\big)$ such that
\be
\rd_A \phi_v=\bar{E}-E=a_v\left([\rd z_v,\rd z_v]-\rd Z_v\right)a_v^{-1}.
\ee
Using $a_v^{-1}\rd_A\phi_va_v=\rd(a_v^{-1}\phi_va_v)$, we can solve for $\phi_v$ to obtain
\be\label{solve phi}
\phi_v=a_v\left([z_v,\rd z_v]-Z_v+\rd g_v\right)a_v^{-1},
\ee
for $g_v\in\Omega^0\big(C_v,\su(2)\big)$. For $x\in\Gamma^*$ the value of $g_v(x)$ is fixed up to an overall constant by the condition $\phi_v(x)=0$:
\be\label{solve g}
g_v(x)=\int_{s(e)}^x\left(Z_v-[z_v,\rd z_v]\right),
\ee
where the integration is along a link $\ell$ in the boundary $\partial F_e$ of the face $F_e$.

We have shown the existence of a gauge field $\phi_v$ taking us from an arbitrary electric field $E\in\mathcal{P}^\mathcal{G}_{\Gamma,\Gamma^*}$ to a flat electric field with vanishing intrinsic and extrinsic curvature in a cell. The next question to ask is whether this choice unique. Since $g_v$ is fixed only on $\Gamma^*$ (and even there only up to a constant), and it is not fixed in $C_v$ or the faces $F_e$, there are many choice of $\phi_v$ which give the transformation $E\rightarrow \bar{E}$. Moreover, any $z_v$ satisfying (\ref{z condition}) gives a flat, invertible triad, so there is not even a unique choice of flat electric field. Therefore the transformation to the flat cell gauge is not unique.

Having found a gauge transformation to a flat electric field in a single cell, we now consider how this transformation affects the geometry at cell boundaries when performing this transformation in all cells of the cellular decomposition. Consider a region $C_{12}\equiv\overline{C_{v_1}}\cup\overline{C_{v_2}}$ formed by two cells and their boundaries. The requirement of continuity of $E$ and $\bar{e}$ at the face $F_e=\overline{C_{v_1}}\cap\overline{C_{v_2}}$ gives conditions at the face:
\be\label{cond z}
\lim_{y \rightarrow x^+}\rd z_2(y)=\lim_{y\rightarrow x^-}h_e^{-1}\rd z_1(y)h_e,\qquad\qquad
\lim_{y \rightarrow x^+}\rd Z_2(x)=\lim_{y\rightarrow x^-}h_e^{-1}\rd Z_1(x)h_e,
\ee
for $x\in F_e$ and $y\in C_{12}$, where $y\rightarrow x^+$ is used to indicate that the coordinate $y$ approaches $x$ from within $C_{v_2}$, and $y\rightarrow x^-$ indicates that $y$ approaches $x$ from within $C_{v_1}$. Using these relations and requiring $\phi_2(x)$ to vanish on the boundary of the face adds another condition:
\be
\lim_{y\rightarrow x^+}\rd g_2(x)=\lim_{y\rightarrow x^+}h_e^{-1}\rd g_1(x)h_e.
\ee
Together, these relations imply that $\phi_2(x)=\phi_1(x)$, so that the gauge field is continuous across faces.

Let us look more closely at the extrinsic curvature on the face $F_e$. In Lemma \ref{K zero} we showed that vanishing torsion implies zero extrinsic curvature. What is the torsion at the face in the flat cell gauge? To answer this question we zoom in to a small neighborhood of $C_{12}$ which contains the face so that we can define a local cartesian coordinate system where $y^0$ is perpendicular to the face, $y^\mu$ for $\mu=1,2$ run parallel to the face, and we set $y^0 = 0$ on the face. In this neighborhood we define a one-form:
\be
\rd z(y)=\rd z_1(y)+\Theta(y^0)\left(h_e^{-1}\rd z_1(y)h_e-\rd z_1(y)\right),
\ee
where $\Theta(y^0)$ is a step function whose value is $0$ for $y\in C_{v_1}$, and $1$ for $y\in C_{v_2}$. The torsion is given by:
\be
\rd_A\bar{e}(y)=\rd_A\left(a(y)\rd z(y)a(y)^{-1}\right)=a(y)\rd\rd z(y)a(y)^{-1}.
\ee
Now, the right hand side of this equation is zero away from the face, but more scrutiny is required at the face. Since $\partial_0\Theta(y^0)=\delta(y^0)$, we see explicitly that $\partial_0\partial_\mu z(y)-\partial_\mu\partial_0z(y) \ne 0$ when $y^0=0$, and therefore $\rd_A\bar{e}\ne0$ at the face. Considering equation (\ref{extrinsic}), this leads to a non-vanishing extrinsic curvature at the faces of the cellular decomposition.

Finally, we close this section with a reconstruction of the flux elements $X_e$ starting from the flat frame field $\bar{e}$. The flux elements in this gauge are given by the simple form:
\be
X^i_e=\f{1}{2}\epsilon^i_{~jk}\int_{F_e}h_{\pi_e}\bar{e}^j\wedge \bar{e}^k h_{\pi_e}^{-1}=\f{1}{2}\epsilon^i_{~jk}\int_{F_e}\rd z_v^j\wedge\rd z_v^k=\f{1}{2}\epsilon^i_{~jk}\int_{\partial F_e}z_v^j \rd z_v^k,
\ee
where we have used the fact that $h_{\pi_e} = a_v^{-1}$.

\subsection{Regge geometries}

\noindent The previous calculation shows that we can think of the phase space $P_\Gamma$ as the phase space of piecewise (metric) flat geometries on $\Sigma\backslash \Gamma^{*}$. Such geometries possess an invertible locally flat metric, with extrinsic curvature concentrated on the faces of the cellular decomposition. This description is reminiscent of Regge geometries. However, it is known that the phase space of loop gravity is bigger than the phase space of Regge geometry \cite{dittrich-ryan}; Regge geometries appear only as a constrained subset. This fact has
triggered the search for the proper geometrical interpretation of the loop gravity phase space, for instance in terms of twisted geometries \cite{freidel-speziale}.

We can now clearly understand the key difference between the phase space of loop gravity and that of Regge geometries. In the flat gauge, the loop gravity phase space corresponds to a cellular decomposition of the spatial manifold $\Sigma$ where the extrinsic curvature is zero within each three-cell $C_v$ but non-zero on the faces $F_e$. The faces do not need to be flat two-surfaces, and may be arbitrarily curved so long as they do not self-intersect and only intersect with other faces along common boundaries. The difference between this setting and a Regge geometry is the arbitrariness in the shape of the faces; the faces are all flat in a Regge geometry.

In order to see how the loop gravity phase space (in the flat gauge) may be reduced to a Regge geometry, we must ask how can the faces be made flat? A necessary condition for a face $F_e$ to be flat is that the boundary $\partial F_e$ is composed of flat links. Since $\Gamma^*$ is the union of all face boundaries, $\Gamma^*$ must consist entirely of flat links in order to obtain a Regge geometry.

Let us go back to the formula for the fluxes that we have derived in the previous subsection:
\be\label{Regge flux}
X^i_e=\f{1}{2}\epsilon^i_{~jk}\int_{\partial F_e}z_v^j\wedge\rd z_v^k,
\ee
where $z_{v}$ is the flat coordinate in the cell $C_v$. One sees that if the links $\ell \in \partial F_e$ are chosen to be flat, then $z_v$ is linear and $\rd z_v$ is constant over $\partial F_e$. This simplifies the expression drastically. Recall that due to the Gauss law, the fluxes are independent of the choice of faces (\ref{same flux}) for fixed $\Gamma^*$. This means that (\ref{Regge flux}) is independent of the choice of face, so that we obtain the same flux whether the face is chosen curved or flat, so long as the boundary of the face is composed of flat links. Indeed, a Regge geometry is given by a unique set of link lengths which can be reconstructed from the fluxes and dihedral angles between links, independently from the choice of faces. Imposing that $\Gamma^*$ be composed entirely of flat links implies that the fluxes can be constructed, using (\ref{Regge flux}), entirely in terms of a discrete piecewise flat geometry \`a la Regge.

In the twisted geometries construction \cite{freidel-speziale} the geometry is seen as flat polyhedra glued together along faces. While two faces that are glued together have the same area, they may generally have different shapes. This means the metric is discontinuous across faces, although it is still possible to define a spin connection \cite{HRWV}. The reduction to a Regge geometry is done using gluing constraints \cite{dittrich-ryan}. These constraints impose that the shapes match by enforcing that corresponding dihedral angles on the face boundaries agree.

In our cellular decomposition there is only one face between neighboring cells, so there is no notion of pairs of faces that must be made to fit together. Once the links of $\Gamma^*$ are made flat, the gluing constraints are automatically satisfied by construction. This means that the set of holonomies and fluxes on a graph can be implemented as a piecewise flat geometry on $\Sigma\backslash\Gamma^*$ by making a particular gauge choice, and corresponds to a Regge geometry if we impose the additional constraint that the edges of $\Gamma^*$ are straight with respect to the flat structure\footnote{This means that $\rd z_v$ is constant on the edges of $\Gamma^*$.}. The phase space of full loop gravity then corresponds to piecewise geometries where this additional restriction is not imposed. In other words, the edges of $\Gamma^*$ do not have to be flat when mapping from the loop gravity phase space to the continuous phase space using the flat cell gauge.

\subsection{Cotangent bundle}

\noindent The result of our construction is that after a choice of gauge, we can express the elements of $P_\Gamma$ as a connection $A$ and an $\su(2)$-valued frame field $e$, which are solutions to
\be
F(A)(x)=0,\qquad\qquad\rd_Ae(x)=0,\qquad\qquad\forall\:x\in\Sigma\backslash\Gamma^*.
\ee
Since $\delta F(A)=\rd_A\delta A$, this is nothing but the cotangent bundle of the space of flat $\SU(2)$ connections on $\Sigma\backslash \Gamma^{*}$.
That is
\be
P_\Gamma= T^*\mathcal{M}_{\Gamma^*},
\ee
where $\mathcal{M}_{\Gamma^*}$ denotes the moduli space of flat connections modulo gauge transformations. This means that at the quantum level we can represent the quantization of holonomies and fluxes in terms of operators acting on holonomies of flat connections. This interpretation has already proposed by Bianchi in \cite{bianchi}. It is interesting to note that this is reminiscent of the geometry considered by Hitchin in \cite{hitchin}.

\subsection{Diffeomorphisms and gauge choices}

\noindent We have seen in subsection \ref{diff-gauge} that diffeomorphisms $\Phi_{\!o}$ connected to the identity that do not move  $\Gamma^*$ and the vertices of $\Gamma$ leave the construction of the holonomy-flux algebra invariant. We have also seen in the beginning of this section that the singular gauge and the flat gauge are diffeomorphism covariant. In general, the construction of $h_e$ and $X_{F_e}$ depends both on $\Gamma$ via the choice of $e$, and on $\Gamma^*$ via the choice of a two-cell $F_{e}$. Now, because of the flatness of the connection, the holonomy does not really depend on the choice of edge $e$, but solely on the choice of the homotopy class of $e$, which itself is left unchanged by diffeomorphisms that are connected to the identity. For the isomorphism between $\mathcal{P}^{\mathcal{G}}_{\Gamma, \Gamma^*}$ and $P^G_\Gamma$, it is interesting to note that the choice of the singular gauge is invariant under a diffeomorphism that does not move $\Gamma$, whereas the choice of the flat gauge is invariant under diffeomorphisms that do not move $\Gamma^*$. Indeed, in the singular gauge the frame field depends on the choice of an edge $e\in\Gamma$, and we have $\Phi^*E=E$ if $\Phi(\Gamma) =\Gamma$. Moreover, under an infinitesimal diffeomorphism $\xi$, the flux becomes
\be
\delta_\xi X_e=\int_{\partial F_e}\iota_\xi\left(h_{\pi_e}(x)E(x)h_{\pi_e}(x)^{-1}\right),
\ee
where $h_{\pi_e}(x)$ is again the holonomy going from the source vertex of the edge $e$ to the point $x$ in $F_e$. We clearly see that this expression vanishes for all $\xi$ when the electric field is in the singular gauge. In the flat gauge, the flux does not depend on $\Gamma$, and the construction is therefore invariant under diffeomorphisms leaving $\Gamma^*$ invariant.

This shows that there is an interesting duality between the two gauges. While the singular gauge respects diffeomorphism invariance with respect to $\Gamma$, the flat one respects diffeomorphism invariance with respect to $\Gamma^*$.

\section{Cylindrical consistency}
\label{sec5}

\noindent An important property of operators in LQG is that of cylindrical consistency associated with a projective family of graphs  \cite{AL}. In a projective family of graphs we have an ordering such that we may write for any two graphs in the family that $\Gamma_1<\Gamma_2$ if $\Gamma_2$ contains all the edges of $\Gamma_1$ in addition to other edges. A cylindrically consistent function is such that the pull-back from $P_{\Gamma_1}$ to $P_{\Gamma_2}$ is identified with the function on the $P_{\Gamma_1}$.

In this section we give a proposal for extending the notion of cylindrical consistency to functionals $\mathcal{O}[A,E]$ of the continuous fields. We analyze to what extent the knowledge of a collection of functions on $P_\Gamma$ for all $\Gamma$ determines a continuous functional. Given a collection of functions $\mathcal{O}_\Gamma\in P_\Gamma$, we now propose an extension of cylindrical consistency to continuous functionals.

\begin{definition}
Suppose that we are given a collection of functions $\mathcal{O}_\Gamma\in P_\Gamma$. We say that such a collection of functions is cylindrically consistent
%
if there exists a continuous functional $\mathcal{O}[A,E]$  such that its restriction on the constraint surface $\mathcal{C}$ is equal to $\mathcal{O}_\Gamma$. That is
\be
\mathcal{O}|_\mathcal{C}[A,E]=\mathcal{O}_\Gamma\big[h_e(A),X_e(A,E)\big].
\ee
\end{definition}

The results presented in the previous sections show that such a continuous functional $\mathcal{O}[A,E]$ is characterized by the the following property:

\begin{proposition}
$\mathcal{O}[A,E]$ is a cylindrical functional if and only if its restriction to the constraint surface $\mathcal{C}$ is invariant under the gauge group $\mathcal{F}_{\Gamma^*}\times\mathcal{G}_\Gamma$ for every pair of dual graphs $(\Gamma,\Gamma^*)$.
\end{proposition}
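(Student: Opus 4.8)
The plan is to reduce the statement to a graph-by-graph assertion and to recognize that, for a fixed pair of dual graphs $(\Gamma,\Gamma^*)$, being a cylindrical functional means precisely that the restriction $\mathcal{O}|_\mathcal{C}$ can be written as a function of the holonomies and fluxes, i.e.\ that it factors through the map $\mathcal{I}:\mathcal{C}\longrightarrow P_\Gamma$, $(A,E)\longmapsto\big(h_e(A),X_e(A,E)\big)$. The whole argument then rests on identifying the fibers of $\mathcal{I}$ with the orbits of $\mathcal{F}_{\Gamma^*}\times\mathcal{G}_\Gamma$, which is exactly the content of Proposition \ref{proposition}. Both implications follow once this identification is in place.

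For the forward implication, suppose $\mathcal{O}$ is a cylindrical functional, so that for each pair $(\Gamma,\Gamma^*)$ there is a function $\mathcal{O}_\Gamma$ on $P_\Gamma$ with $\mathcal{O}|_\mathcal{C}=\mathcal{O}_\Gamma\circ\mathcal{I}$. We have already established before Proposition \ref{proposition} that $\mathcal{I}$ is invariant under the action of $\mathcal{F}_{\Gamma^*}\times\mathcal{G}_\Gamma$: the holonomies $h_e$ and fluxes $X_e$ are untouched both by the translations $E\mapsto E+\rd_A\phi$ with $\phi$ vanishing on $\Gamma^*$, and by the gauge transformations $g_{\!o}$ fixed to the identity at $V_\Gamma$. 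Composing an invariant map with any function preserves this invariance, so $\mathcal{O}|_\mathcal{C}=\mathcal{O}_\Gamma\circ\mathcal{I}$ is automatically invariant under $\mathcal{F}_{\Gamma^*}\times\mathcal{G}_\Gamma$.

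For the converse, assume $\mathcal{O}|_\mathcal{C}$ is invariant under $\mathcal{F}_{\Gamma^*}\times\mathcal{G}_\Gamma$ for every pair of dual graphs. Fixing one such pair, invariance means $\mathcal{O}|_\mathcal{C}$ is constant along the gauge orbits, hence descends to a well-defined function on the quotient $\mathcal{P}_{\Gamma,\Gamma^*}=\mathcal{C}/\big(\mathcal{F}_{\Gamma^*}\times\mathcal{G}_\Gamma\big)$. By Proposition \ref{proposition} the induced map $[\mathcal{I}]:\mathcal{P}_{\Gamma,\Gamma^*}\longrightarrow P_\Gamma$ is a symplectomorphism, in particular a bijection, so I would simply set $\mathcal{O}_\Gamma\equiv\big(\mathcal{O}|_\mathcal{C}\big)\circ[\mathcal{I}]^{-1}$. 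Pulling this back through $\mathcal{I}$ reproduces $\mathcal{O}|_\mathcal{C}=\mathcal{O}_\Gamma\circ\mathcal{I}=\mathcal{O}_\Gamma\big[h_e(A),X_e(A,E)\big]$, which is exactly the cylindrical-consistency condition, and exhibits $\mathcal{O}$ itself as the continuous witness; hence $\mathcal{O}$ is a cylindrical functional.

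The step that carries the real weight is the identification of the fibers of $\mathcal{I}$ with the gauge orbits, and here the only non-trivial input is the \emph{injectivity} of $[\mathcal{I}]$ supplied by Proposition \ref{proposition}: two constrained configurations sharing the same holonomies and fluxes must lie in a common $\mathcal{F}_{\Gamma^*}\times\mathcal{G}_\Gamma$-orbit. Without this, invariance would be only necessary and not sufficient, since the descended function could fail to be single-valued on $P_\Gamma$. The remaining points are essentially bookkeeping: checking that the descended function inherits enough regularity to count as a function on $P_\Gamma$, and noting that the equivalence is proved for each pair $(\Gamma,\Gamma^*)$ separately, so that the quantifier ``for every pair of dual graphs'' transfers directly between the two sides of the statement.
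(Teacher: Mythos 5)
Your proof is correct and follows essentially the same route as the paper: both directions hinge on the fact that the holonomy-flux map $\mathcal{I}$ is invariant under $\mathcal{F}_{\Gamma^*}\times\mathcal{G}_\Gamma$ and that the induced map $[\mathcal{I}]$ on the quotient is a bijection (Proposition \ref{proposition}), so that gauge invariance of $\mathcal{O}|_\mathcal{C}$ is equivalent to factoring through $(h_e,X_e)$. The paper's own argument is much terser, but your explicit identification of the injectivity of $[\mathcal{I}]$ as the load-bearing step is exactly the content it implicitly relies on.
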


Indeed, suppose that we have a functional $\mathcal{O}[A,E]$ defined on the phase space $\mathcal{P}$ such that  its restriction to the constraint surface $\mathcal{C}$ is then $\mathcal{O}|_\mathcal{C}[A,E]$, where the field configurations now satisfy $F(A)=0$ outside of the dual graph $\Gamma^*$, and $\rd_AE=0$ outside of the vertices $V_\Gamma$.  $\mathcal{O}[A,E]$ is a cylindrically consistent functional if and only if
\be
\mathcal{O}|_\mathcal{C}\big[g\triangleright A,(\phi,g)\triangleright E\big]=\mathcal{O}|_{{}_\mathcal{C}}[A,E],
\ee
which necessarily implies that $\mathcal{O}[A,E]=\mathcal{O}\big[h_e(A),X_e(A,E)\big]$.

This proposition gives us a powerful criterion to check wether a continuous functional can be represented as a collection of functions associated with $P_{\Gamma}$. For instance, we can analyze the status of geometrical functionals such as area and volume. We know that the continuous expression for the area functional is
\be
\mathbf{A}(S)=\int_S\sqrt{\tilde{E}^i_a\tilde{E}_i^a}.
\ee
One can easily see that even when we restrict this functional to the constraint surface $F(A)=0$ outside $\Gamma^{*}$ and $\rd_{A}E =0$, this functional is \textit{not} invariant under the translations $E\longmapsto E+\rd_{A}\phi$. Therefore, this functional is \textit{not} expressible purely in terms of holonomies and fluxes associated with the graph $\Gamma$. However, in loop quantum gravity, the area operator is expressed as an operator acting on the graph $\Gamma$, and is the quantum version of a function of the fluxes\footnote{For the moment we shall use the covariant fluxes (\ref{new flux}) in this definition, even though the traditional LQG area operator descends from a functional defined using (\ref{usual flux}). We shall comment more on this below.}:
\be
\label{ALQG}
\mathbf{A}_\text{\tiny{LQG}}(S)=\sum_{e|e\cap S\neq0}\sqrt{X^{i}_{e}X_{ei}}.
\ee
Our proposition therefore shows that the LQG area operator does not come from the continuous area functional. This means that we have
\be
\mathbf{A}(S)|_\mathcal{C}-\mathbf{A}_\text{\tiny{LQG}}(S)\neq0.
\ee
So in that sense, the LQG operator is not a proper approximation of the continuous area functional.

This is puzzling since the LQG area operator has been used extensively and derived in many ways. This result thus raises the question of the exact relationship between these two objects. To what extend does the LQG operator capture information about the continuous area functional? Now, since we have the exact relationship between the discrete and continuous phase spaces, we can investigate this question a bit further.

First, let us recall that the continuous and LQG areas are not unrelated. In fact, for any product $h_\Gamma$ of holonomies supported on the graph $\Gamma$, they satisfy
\be
\big\lbrace\mathbf{A}(S)|_\mathcal{C}-\mathbf{A}_\text{\tiny{LQG}}(S),h_\Gamma\big\rbrace=0.
\ee
So even if $\mathbf{A}|_\mathcal{C}-\mathbf{A}_\text{\tiny{LQG}}$ does not vanish, it belongs to the commutant of the holonomy algebra.

The second key remark is that if we have a non-gauge-invariant functional like $\mathbf{A}(S)$, we can promote it to a gauge-invariant functional under $\mathcal{F}_{\Gamma^*}$ by picking up a gauge. This can be done by working with  $\mathbf{A}^\mathcal{T}(S)\equiv\mathbf{A}(S)\big(E(X_e)\big)$ instead of  $\mathbf{A}(S)(E)$, where $\mathcal{T}$ is a gauge choice as described in section \ref{sec4}. Such a functional is by construction invariant under $\mathcal{F}_{\Gamma^*}$, since it depends only on the fluxes. Moreover, the difference between two functionals that differ by a choice of gauge belongs to the commutant of the holonomy algebra:
\be
\big\lbrace\mathbf{A}^\mathcal{T}(S)-\mathbf{A}^{\mathcal{T}'}(S), h_{\Gamma}\big\rbrace=0.
\ee

This implies that the LQG area operator is the quantization of the continuous area functional written in a particular gauge, and as described in section \ref{sec4}, the interpretation of geometry in LQG is given by the singular gauge.
This explains why it can be expressed purely in terms of fluxes.

So far in (\ref{ALQG}) we have considered the covariant flux (\ref{new flux}) rather than the usual definition (\ref{usual flux}). Does this analysis hold for an area operator defined from the traditional definition of flux? In the singular gauge the electric field is given by (\ref{Esing}), and the integral defining the covariant flux (\ref{new flux}) receives a contribution only at the point of intersection between the surface $S$ and the edge $e$. The dependence on $h_\pi$ is traced out in the definition (\ref{ALQG}) so that in the singular gauge, the area functional is the same whether one uses the covariant flux or the usual definition. Therefore, the above analysis is valid for either form of the flux.

Now, what is unclear is to what extent the knowledge of a function in a given gauge allows reconstruction of the continuous functional. Also, if one chooses another gauge, like the flat gauge of spin foam models, we are going to construct a different family of area functions associated with graphs, which will differ from $\mathbf{A}_\text{\tiny{LQG}}$ by an element of the commutant of the holonomy algebra. It is not clear which family of operators (if any) we should use to capture in the most efficient way information about the continuous volume operator.

\section*{Discussion and conclusion}

In this paper, we have shown that the discrete phase space of loop gravity associated with a graph $\Gamma$ can be interpreted as the symplectic reduction of the continuous phase space of gravity with respect to a constraint imposing the flatness of the connection everywhere outside of the dual graph $\Gamma^*$. This allows us to give a clear interpretation of the discrete flux variables as labeling an equivalence class of continuous geometries. The point of view that the discrete data represents a set of continuous geometries has already been advocated in \cite{rovelli-speziale}. Our approach gives a precise understanding of which set or equivalence class of continuous geometries is represented by the discrete geometrical data $(h_e,X_e)$ on a graph. It provides a classical understanding of the work by Bianchi \cite{bianchi}, who showed that the spin network states can be understood as states of a topological field theory living on the complement of the dual graph. It also allows us to reconcile the tension between the loop quantum gravity picture, in which geometry is thought to be singular, and the spin foam picture, in which the geometry is understood as being locally flat. We now see that both interpretations are valid and correspond to different gauge choices in the equivalence class of geometries represented by the fluxes. It gives us a new understanding of the geometrical operators used in loop quantum gravity as gauged fixed operators, and allows us to investigate further the relationship between these operators and the continuous ones. Finally, it opens the way to a classical formulation of loop gravity. We can now face the question of whether the dynamics of classical general relativity can be formulated in terms of these variables. We plan to come back to this issue of defining a loop classical gravity in the future.

\subsection*{Acknowledgments}

\noindent It is a pleasure to thank Valentin Bonzom, Eugenio Bianchi, Johannes Brunnemann, Kirill Krasnov and Karim Noui for discussions and comments.

\end{document}